\documentclass[10pt]{article}
\usepackage{amsthm,amsmath}
\usepackage{amssymb}
\usepackage{tikz}
\usepackage{xfrac}
\usepackage{bm}

\usepackage{mathtools}
\usepackage{enumerate}

\newcommand{\cN} { {\mathcal{N}}}
\newcommand{\cS} { {\mathcal{S}}}

\newcommand{\bN} { {\mathbb{N}}}

\newcommand{\bC} { {\mathbb{C}}}

\newcommand{\bZ} { {\mathbb{Z}}}

\newcommand{\bUpsilon}{{\bm \Upsilon}}

\def\res{\operatorname{res}}

\newtheorem{theorem}{Theorem}
\newtheorem{prop}[theorem]{Proposition}
\newtheorem{thm}[theorem]{Theorem}
\newtheorem{corollary}[theorem]{Corollary}
\newtheorem{lemma}[theorem]{Lemma}
\newtheorem{remark}[theorem]{Remark}
\newtheorem{algorithm}[theorem]{Algorithm}

\newtheorem{defi}[theorem]{Definition}
\newtheorem{example}[theorem]{Example}

\newtheorem{hypo}[theorem]{Hypothesis}

\makeatletter
\newcommand{\myitem}[1]{%
	\item[(#1)]\protected@edef\@currentlabel{#1}%
}
\makeatother

\def\eatspace#1{#1}
\def\step#1#2{\par\kern1pt\hangindent#2em\hangafter=1\noindent\rlap{\small#1}\kern#2em\relax\eatspace}

\let\set\mathbb
\def\<#1>{\langle#1\rangle}

\def\lc{\operatorname{lc}}

\begin{document}

%%
%% The "title" command has an optional parameter,
%% allowing the author to define a "short title" to be used in page headers.
\title{Symbolic Integration in Weierstrass-like Extensions
\thanks{S.\ Chen, W.\ Li and X.\ Li were partially supported by the National Key R\&D Program of China (No. 2023YFA1009401) and
		the NSFC grant (No. 12271511).
		M.\ Kauers was supported by the Austrian FWF grants PAT 9952223, PAT 8258123, and I6130-N.
		X.\ Li was partially supported by the Land Oberösterreich through the LIT-AI Lab.
		This work also was supported by the International Partnership Program of Chinese Academy of Sciences (Grant No. 167GJHZ2023001FN).}
}

%%
%% The "author" command and its associated commands are used to define
%% the authors and their affiliations.
%% Of note is the shared affiliation of the first two authors, and the
%% "authornote" and "authornotemark" commands
%% used to denote shared contribution to the research.

\author{
	Shaoshi Chen$^{a, b}$, Manuel Kauers$^c$,  \\
\bigskip
Wenqiao Li$^{a, b}$, Xiuyun Li$^{a, b, c}$, and David Masser$^d$\\
	$^a$KLMM,\, Academy of Mathematics and Systems Science, \\ Chinese Academy of Sciences, \\100190 Beijing, China\\
	$^b$School of Mathematical Sciences, \\University of Chinese Academy of Sciences,\\100049 Beijing, China\\
	$^c$Institute for Algebra, \\Johannes Kepler University,\\ 4040 Linz, Austria\\
    $^d$Department of Mathematics and Computer Science, \\University of Basel, \\ \bigskip Basel, 4003, Switzerland \\
	{\sf schen@amss.ac.cn, manuel.kauers@jku.at}\\
	{\sf liwenqiao@amss.ac.cn,lixiuyun@amss.ac.cn,david.masser@unibas.ch}
}

\maketitle
%%
%% By default, the full list of authors will be used in the page
%% headers. Often, this list is too long, and will overlap
%% other information printed in the page headers. This command allows
%% the author to define a more concise list
%% of authors' names for this purpose.
%\renewcommand{\shortauthors}{Chen et al.}

%%
%% The abstract is a short summary of the work to be presented in the
%% article.
\begin{abstract}
 This paper studies the integration problem in differential fields
		that may involve quantities reminiscent of the Weierstrass $\wp$
		function, which are defined by a first-order nonlinear differential equation.
		We extend the classical notion of special polynomials to elements of Weierstrass-like extensions
		and present algorithms for reduction in such extensions. As an application of these
		results, we derive some new formulae for integrals of powers of~$\wp$.
\end{abstract}

\maketitle

\section{Introduction}
	
	Modern algorithms for symbolic integration are primarily governed by two complementary approaches. One
	approach uses annihilating linear operators to describe functions, and performs
	computations in operator algebras~\cite{Zeilberger1990,ChyzakSalvy1998,Kauers2023} in order to answer questions
	about a given integral. The other approach uses differential fields to
	describe functions, and performs computations in these fields in
	order to answer questions about a given integral~\cite{Risch1969,Risch1970,BronsteinBook}.
	The latter approach has proved particularly successful for so-called elementary functions,
	but has also seen extensions to other types of functions, see \cite{raab22} and the references given there.
	
	Here we are concerned with integrals involving functions that are defined by
	(possibly nonlinear) first-order differential equations, i.e., functions~$y$
	which satisfy $Q(y',y)=0$ for some bivariate polynomial~$Q$. A prototypical
	example for such a function is the Weierstrass $\wp$ function, which satisfies
	\begin{equation}\label{EQ:dewp}
		\wp'(z)^2=4\wp(z)^3-g_2\wp(z)-g_3
	\end{equation}
	for certain constants $g_2,g_3\in\set C$ with $g_2^3-27g_3^2\neq0$.
	
	In differential algebra aspect, quantities satisfying this differential equation
	were already considered by Kolchin~\cite{Kolchin1953}, who called them
	\emph{Weierstrassian.} Slightly generalizing by also allowing other
	polynomials~$Q$, we call the extensions generated by these functions \emph{Weierstrass-like.} There has been
	some recent work related to symbolic integration with
	Weierstrassian elements. Variants of Liouville's theorem covering this case
	have been proposed by Kumbhakar and Srinivasan~\cite{kubhakar23} and by Pila and
	Tsimerman~\cite{pila22}.
	
	From a more computational perspective, parallel integration has been applied to
	integration problems involving functions defined by nonlinear differential
	equations. Bronstein~\cite{Bronstein2007} gives an example involving the Lambert~$W$
	function, and B\"ottner~\cite{Boettner2010}  gives some examples involving the
	Weierstrass~$\wp$ function. Apart from this, not much is known about
	integration theory and algorithms when the integrand involves $\wp$ or similar
	functions. In fact, not too many identities about integrals involving $\wp$ are
	available in the literature.
	
	Our contribution in this paper consists of three aspects. On the theoretical
	side, we propose an extension of the classical notion of \emph{special
		polynomials} to elements of Weierstrass-like extensions. On the algorithmic
	side, we propose an algorithm for reduction process, thereby
	continuing an ongoing trend in the development of integration
	algorithms~\cite{bostan10b,BCCLX2013a,Chen2018JSC,ChenKauersKoutschan2016,DHL2018,BCLS2018,chen21b,vanderHoeven21,CDK2023,DGLL2025}.
	Finally, as an application of these
	results, we obtain some new identities about integrals of powers of~$\wp$.
	
	\section{An appetizer}\label{SECT:start}
	We start our study from the integration problem of powers of the Weierstrass~$\wp$ function, i.e.,
	the problem of evaluating the integral $I_n(z) :=  \int \wp(z)^n \, dz$ with $n\in \bZ$.
	We will recall some classical formulae from the book~\cite{WW1927}.
	By differentiating the both sides of the differential equation~\eqref{EQ:dewp}, we obtain that
	$\wp(z)^2 = \frac{1}{6}\wp''(z) + \frac{1}{12}g_2$,
	which implies that
	\[\int \wp(z)^2 \, dz = \frac{1}{6}\wp'(z) + \frac{1}{12}g_2z + C, \quad \text{$C$ is a constant}. \]
	In the following, we will always omit $C$ if no confusion arises.
	So the integral $I_2(z)$ is in the field generated by $\wp(z)$ and $\wp'(z)$ over $\bC(z)$.
	After developing the theory of integration in elementary terms in this paper, we will show that the integral
	of $\wp(z)$  is not elementary over the field $\bC(z)(\wp(z), \wp'(z))$. For this reason, we should introduce some new functions, such as
	the Weierstrass zeta-function $\zeta(z)$ satisfying the equation $\zeta'(z) = -\wp(z)$ and the function $\sigma(z)$ with $\zeta(z) = \sigma'(z)/\sigma(z)$.
	In terms of these two new functions, we can evaluate the integral
	\[\int \frac{1}{\wp(z)}\, dz = \frac{1}{\sqrt{-g_3}} \left(\log\left(\frac{\sigma(z-v)}{\sigma(z+v)}\right) + 2\zeta(v) z\right),\]
	where $v\in \bC$ be such that $\wp(v)=0$ and $\wp'(v)= \sqrt{-g_3}$. To evaluate the general integral
	$I_n(z)$, we need the inverse function of $\wp(z)$, which is defined by the formula~\cite[p.\ 438]{WW1927}
	\[\wp^{-1}(z) = \int_{z}^{\infty} \frac{1}{\sqrt{4t^3- g_2t - g_3}} \, dt.\]
	By using change of variables, we have
	\[\int \wp(z)^n \, dz = J_n(\wp(z)), \quad \text{with $J_n(t) := \int \frac{t^n\, dt}{\sqrt{4t^3- g_2t - g_3}}$}.\]
	The integral $J_n(t)$ satisfies a linear recurrence equation of the form
	
	\begin{alignat*}1
		J_{n}(t) &= \frac{g_2(2n-3)}{8n-4} J_{n-2}(t) + \frac{g_3(n-2)}{4n-2} J_{n-3}(t)\\
		&+ \frac{t^{n-2}\sqrt{4t^3- g_2t - g_3}}{4n-2},
	\end{alignat*}
	with initial values $J_0(t) = \wp^{-1}(t), J_1(t) = -\zeta(\wp^{-1}(t))$ and
	\[ J_2(t) = \frac{\sqrt{4t^3- g_2t - g_3}}{6} + \frac{g_2}{12} \wp^{-1}(t).\]
	The above recurrence can be computed by creative telescoping using the {\sc Mathematica} package {\bf HolonomicFunctions.m}~\cite{koutschan10c}.
	From the above recurrence for $J_n(t)$ and changing back the variables, we can obtain the integral
	\[\int \wp(z)^3 \, dz = \frac{1}{10} \wp(z)\wp'(z) - \frac{3g_2}{20} \zeta(z) +\frac{g_3}{10} z \]
	and also the integral
	\[\int \wp(z)^4 \, dz = \frac{1}{14} \wp(z)^2\wp'(z)+ \frac{5g_2}{168}\wp'(z)  + \frac{5g_2^2}{336} z -\frac{g_3}{7}\zeta(z) .\]
	We will revisit these integrals $I_n(z)$ with $n\in \bN$ in Section~\ref{SECT:revisit}.
	\section{Algebraic Functions}\label{SECT:Pre}
	
	In this section, we recall some terminologies about fields of algebraic functions
	of one variable from the book~\cite{chevalley1951}. Let $k$ be a field of
	characteristic $0$ and $t$ be transcendental over~$k$. If $m\in
	k[t,X]$ is an absolutely irreducible polynomial, i.e., $m$ is
	irreducible over the algebraic closure $\bar{k}$ of $k$, then $k$ is algebraically closed in $K:=k(t)[X]/\<m>$ by~\cite[Section 3, Thm 1]{Trager1984}. Such $K$ is called a \emph{field of algebraic
	functions of one variable} over~$k$.
	
	A subring $\mathcal{O}\subsetneq K$ is called a {\em valuation ring} if $\mathcal{O}$ contains $k$, and for any
	nonzero $x\in K$, either $x\in \mathcal{O}$ or $x^{-1}\in \mathcal{O}$.  A
	valuation ring is a local ring with a principal maximal ideal. The maximal ideal
	of a valuation ring is called a \emph{place.}
	For a given place~$P$, there is a unique valuation ring $\mathcal{O}_P$ of which
	$P$ is the maximal ideal. The residue field $\mathcal{O}_P/P$ is denoted by
	$\Sigma_P$. It is a finite algebraic extension of~$k$.
	
	Assume $P=u\mathcal{O}_P$. The \emph{order function} of $K$ at $P$ is a
	map $\nu_P\colon K \to \bZ \cup \{+\infty\}$ defined by $\nu_{P}(x)=\max\{n\mid x\in u^n\mathcal{O}_P \}$.
	One can prove that $\nu_{P}$ is well-defined and has the following properties:
	\begin{itemize}
		\item[(i)] $\nu_P(x)=+\infty$ if and only if $x=0$;
		\item[(ii)] $\nu_P(xy)=\nu_P(x)+\nu_P(y)$ for all $x,y\in K$;
		\item[(iii)] $\nu_P(x+y)\geq\min\{\nu_P(x),\nu_P(y)\}$ for all $x,y\in K$ and equality holds if $\nu_{P}(x)\ne \nu_{P}(y)$
	\end{itemize}
    In fact,
	$\mathcal{O}_P := \{x\in K\mid\nu_P(x)\ge0\}$,
	and every elements of $P$ have strictly positive order at $P$.
	
	\begin{example}\label{EX:rat}
		Let $k$ be a field of characteristic $0$ and $k(t)$ be the field of rational
		functions over~$k$. A place of $k(t)$ is generated by either $t^{-1}$ or an
		irreducible polynomial $p\in k[t]$ in their valuation ring
		$\mathcal{O}_{\infty}$ or $\mathcal{O}_p$, respectively. We denote the
		respective order functions by $\nu_{\infty}$ and $\nu_{p}$. 
		Let $a,b\in k[t]$ with $ab\neq 0$, and $f=a/b\in C(t)$.  Then
		$\nu_{\infty}(f)=\deg_t b-\deg_t a$.  If we write $f=p^{n}a_0/b_0$ where $n\in
		\mathbb{Z}$, $\gcd(a_0,b_0)=1$ and $p\nmid a_0b_0$, then $\nu_{p}(f)=n$. We say that
		$t^{-1}\mathcal{O}_{\infty}$ is the {\em infinite place} of $k(t)$ and
		$p\mathcal{O}_p$ is a {\em finite place}.
		In fact, 
		\[
		\mathcal{O}_\infty=\left\{\frac{a}{b}\mid \deg_t a< \deg_t b \right\}\,\, \text{and}\,\,\mathcal{O}_p= \left\{\frac{a}{b} \mid \ \gcd(a,b)=1,\ p\nmid b\right\}
		\]

		Let $\Sigma_\infty$ be the residue field of $\mathcal{O}_\infty$ and $\Sigma_p$
		be that of $\mathcal{O}_p$. It is straightforward to check that
		$\Sigma_\infty=k$, and $\Sigma_{p}$ is isomorphic to $k(\beta)$, where
		$\beta\in \bar{k}$ is a root of~$p$. 
		%Consequently, taking the image of an element in $\mathcal{O}_{p}$ under the quotient is equivalent to evaluating
		%at~$\beta$.
		
		For convenience, we also use $\mathcal{O}_{t^{-1}}$ to refer to
		$\mathcal{O}_{\infty}$.

	\end{example}
	
	According to \cite[page 119]{Bronstein1990}, for a place $P$ of $K$, either there exists a unique irreducible polynomial $p$ such that $p\in P$, in which case $P$ will be called a \emph{finite place}, or $t^{-1}\in P$, in which case $P$ will be called an \emph{infinite place}. 
	
	Let $q$ be $t^{-1}$ or an irreducible polynomial of $k[t]$. If $q\in P$, then we say that $P$ \emph{lies above}~$q$, or equivalently, $q$~\emph{lies below}~$P$. In fact, $q$ lies below at least one place of $K$ but not infinitely many places. Now assume that $q$ lies below $P$.
	Let $\mathcal{D}$ be the valuation ring of $P$. Then $\mathcal{O}:=\mathcal{D}\cap
	k(t)$ is the valuation ring in $k(t)$ of the place $Q:=P\cap \mathcal{O}$, which is generated by $q$. Since $Q$ is the contraction of $P$ in $\mathcal{O}$, one can identify the residue field $\Sigma_{Q}$ as a subfield of $\Sigma_P$.
	
	%Assume that $P$ is a place of $K$ with
	%valuation ring $\mathcal{D}$. It is immediate that $\mathcal{O}:=\mathcal{D}\cap
	%k(t)$ is a valuation ring in $k(t)$ with place $Q:=P\cap \mathcal{O}$. Assume that $Q$~is generated by~$q$, which is either an irreducible
	%polynomial in $k[t]$ or~$t^{-1}$. Then we say the place $P$ \emph{lies above}~$q$,
	%or equivalently, $q$~\emph{lies below}~$P$. Note that $Q$ is the contraction of $P$
	%	in~$\mathcal{O}$.  One can identify the residue field $\Sigma_{Q}$ as a subfield
	%	of $\Sigma_P$. For an irreducible polynomial in $k[t]$ or $t^{-1}$, it lies below a finite number of places of $K$. If $P$ lies above $t^{-1}$, then we say $P$ is an \emph{infinite
		%	place} of~$K$, otherwise we say $P$ is a \emph{finite place.}
	
	Let $\nu_Q$ be the order function of $k(t)$ at $Q$ and $\nu_{P}$ be that of $K$
	at~$P$. Since $\nu_{P}\big(k(t)\setminus\{0\}\big)$ forms an additive subgroup
	of $\mathbb{Z}$ and contains nonzero integers, it is generated by a positive integer~$r_P$, which is called the {\em ramification index} of~$P$.
	It is clear that
	\begin{equation}\label{EQ:val}
		\forall\ x\in k(t)\setminus\{0\},\, \nu_{P}(x)=r_P\nu_Q(x).
	\end{equation}
	In particular, $\nu_{P}(q)=r_P$.
	
	An element $x\in K$ is said to be \emph{integral} at $q$ if the order of $x$ is
	nonnegative at each place of $K$ lying above~$q$. The set of all elements integral at $q$ is a free $\mathcal{O}_q$-module of rank $[K:k(t)]$. A basis of this module is called a \emph{local integral basis}
	at~$q$. If $x$ is integral at each irreducible polynomial in $k[t]$, we say that
	$x$ is integral over $k[t]$. This is equivalent to saying that the monic minimal polynomial of $x$ belongs to $k[t,X]$. The set of all integral elements is
	a free $k[t]$-module of rank $[K:k(t)]$, and a basis of this module is called an
	\emph{integral basis} of $K$. Several algorithms are known for computing an integral
	basis for $K$~\cite{Trager1984,Rybowicz91,vanHoeij94}.
	
	The following conventions will be used throughout this paper. Let
	$v\in k[t]$ be squarefree and let $S$ be the set of all irreducible
	factors of~$v$. By
	``integral at~$v$'' we mean ``integral at each $p\in S$'', and similarly, by
	``local integral basis at~$v$'' we mean ``local integral basis at each $p\in S$ ''.
	Denote $\mathcal{O}_v=\bigcap_{p\in S}\mathcal{O}_p$. All the elements in $K$, that are integral at $v$, form a
	free $\mathcal{O}_v$-module of rank $[K:k(t)]$.

	\section{Special Polynomials and Places}
	In the rest of this paper, we
	let $(k,\,')$ be a differential field of characteristic~$0$, $C_k$ be its subfield of constants and $k'$ its set of the derivatives in $k$. Let $E$ be a differential extension of~$k$. Kolchin in \cite[page 803]{Kolchin1953} defines that an element $t\in E$ is called \emph{Weierstrassian} over $k$ if $t$ is not a constant and $(t')^2=\alpha^2(4t^3-g_2t-g_3)$, where $\alpha\in k$, $g_2,g_3\in C_k$ and $27g_3^2-g_2^3\ne 0$. In this section, we consider a more general situation.
	
	\begin{defi}\label{DEF:Weierstrass-like}
		Suppose that $t\in E$ is
		transcendental over~$k$ and $t'\in \overline{k(t)}$ is
		integral over $k[t]$. Let $m(t,X)\in k[t,X]$
		be the monic absolutely irreducible minimal polynomial of~$t'$. Then $K:=k(t,t')$ is a field of algebraic functions of one variable, and a differential extension of $(k,\,')$.  We call $K$ a \emph{ Weierstrass-like extension} of~$k$.
	\end{defi}
	
	If $t$ is transcendental and Weierstrassian over $k$, then $k(t,t')$ is a Weierstrass-like extension. Definition~\ref{DEF:Weierstrass-like} also generalizes Bronstein's notion of monomial extensions~\cite[Section 3.4]{BronsteinBook}. A monomial extension is a
	Weierstrass-like extension with $m(t,X)= X-s(t)$ for some $s \in k[t]$. In a monomial extension, a
	polynomial $p\in k[t]$ is called special if $p\mid p'$. Being special is closely
	related to $K$ having more constants than~$k$, and with differentiation
	affecting orders in a different way than the usual derivation
	$\frac{d}{dx}$. That is why special polynomials need special attention.
	
	In our case, we have nontrivial algebraic extensions, so $p'$ need not be a
	polynomial even if $p$ is. We therefore need to refine the concept of being
	special for Weierstrass-like extensions.
	
	\begin{defi}\label{DEF:special}
		Let $\beta \in \bar{k}$. We call $\beta$ a \emph{special point} of $m$
		if $m(\beta, \beta') = 0$. For an irreducible polynomial $p\in k[t]$, we say
		that $p$ is \emph{special} (w.r.t.~$m$) if it admits a root that
		is a special point. Otherwise, $p$ is called \emph{normal}.
	\end{defi}	
	
	By \cite[Thm 3.2.4]{BronsteinBook}, a $k$-automorphism of any
	algebraic extension of $k$ commutes with derivation. Hence conjugation preserves  the specialness. Therefore, if one of the
	roots of a polynomial $p$ is a special point, then all roots of $p$ are special
	points. The definition of special points corresponds to~\cite[Thm
	3.4.3]{BronsteinBook}.
	
	For $q\in k[t]$, we let $\kappa(q)$ be the polynomial obtained by differentiating the coefficients of~$q$, and $\partial_t(q) = \frac{d}{dt}(q)$ be the formal derivative of $q$ with
	respect to~$t$. The operations $\kappa$ and $\partial_t$ are both derivations on $k[t]$, and $q'=\kappa(q)+\partial_t(q)t'$. One can see that $q'$ is integral over $k[t]$ since $t'$ is.
	
	\begin{thm}\label{THM:vspecial}
		Let $p \in k[t]$ be irreducible. Then $p$ is special if and only
		if there exists a place $P$ of $K$ lying above $p$ such that $\nu_{P}(p') > 0$.
	\end{thm}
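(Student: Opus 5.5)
The plan is to pass to the residue field at a place above $p$ and compare $p'$ with the residue of $t'$.

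First, fix a place $P$ of $K$ lying above $p$. Then $t\in\mathcal{O}_P$, since $p\in P$ and $t$ is integral at $p$. Also $t'\in\mathcal{O}_P$, because $t'$ is integral over $k[t]$. The residue field $\Sigma_P$ is a finite extension of $k$ and contains $\Sigma_p\cong k(\beta)$. Here we take $\beta$ to be the residue class $\bar t$ of $t$, which is a root of $p$ since $p\in P$.

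Let $\gamma\in\Sigma_P$ be the residue class of $t'$. Reducing $m(t,t')=0$ modulo $P$ gives $m(\beta,\gamma)=0$, because the coefficients of $m$ lie in $k[t]$. Conversely, the roots of $m(\beta,X)$ are exactly the residues of $t'$ at the places above $p$ lying over the same embedding of $\beta$. This is the standard fact that the monic polynomial $m(\beta,X)$ factors over $\bar k$ according to the places above $p$, with each root attained at some place. To justify it I would use that $\prod_{P'}(\text{norm relation})$ holds, namely that the characteristic polynomial of $t'$ is $m$ and $\sum_{P\mid p} r_P\deg\Sigma_P=[K:k(t)]$. Alternatively I would argue directly: if $\gamma_0$ is a root of $m(\beta,X)$ not attained by any place, then $1/(t'-\tilde\gamma_0)$, with $\tilde\gamma_0$ a lift, would be integral at $p$. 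That contradicts the norm of $t'-\tilde\gamma_0$ vanishing at $p$.

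Next, compute $p'$ modulo $P$. We have $p'=\kappa(p)+\partial_t(p)\,t'$, and both terms lie in $\mathcal{O}_P$. Hence $\nu_P(p')>0$ if and only if $\kappa(p)(\beta)+\partial_t(p)(\beta)\gamma=0$ in $\Sigma_P$. Differentiating $p(\beta)=0$ in $\bar k$ gives $\kappa(p)(\beta)+\partial_t(p)(\beta)\beta'=0$. Since $p$ is irreducible in characteristic $0$, it is separable, so $\partial_t(p)(\beta)\neq0$. The condition is therefore equivalent to $\gamma=\beta'$. This uses that the derivation on $k$ extends uniquely to $\Sigma_P\subset\bar k$, by \cite[Thm 3.2.4]{BronsteinBook}, and that this extension is compatible with the embedding of $\Sigma_P$ into $\bar k$.

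Now combine the two steps. If $\nu_P(p')>0$, then $\beta'=\gamma$ satisfies $m(\beta,\beta')=0$, so $\beta$ is a special point and $p$ is special. Conversely, suppose $p$ is special, with root $\beta$ satisfying $m(\beta,\beta')=0$. By the lifting fact above, there is a place $P$ above $p$ at which $\bar t=\beta$ and $\overline{t'}=\beta'$. Then $\nu_P(p')>0$. Conjugation invariance, noted after Definition~\ref{DEF:special}, lets us start from any root of $p$.

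The main obstacle is this converse direction: showing that every root $\gamma$ of $m(\beta,X)$ is realized as the residue of $t'$ at some place above $p$, which rests on the integrality of $t'$. I would first try to prove it via the factorization of $m$ over the completion $\overline{k}((t-\beta))$, using Puiseux/Hensel arguments. Each root branch of $m$ over that completion is a place, and because $t'$ is integral each branch has a finite value at $t=\beta$. Those values are exactly the roots of $m(\beta,X)$.
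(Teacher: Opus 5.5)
Your proposal is correct and follows the same route as the paper. You reduce $p'=\kappa(p)+\partial_t(p)\,t'$ into $\Sigma_P$ to get $\nu_P(p')>0$ iff $\overline{t'}=\beta'$, read off $m(\beta,\beta')=0$ for the forward direction, and for the converse exhibit a place above $p$ at which $t'$ has residue $\beta'$. The only difference is that the paper gets that place by citing Kummer's theorem \cite[Thm 3.3.7]{stichtenoth2009}, whereas you justify it directly. Your norm argument suffices here because $\beta'=-\kappa(p)(\beta)/\partial_t(p)(\beta)\in k(\beta)\cong\Sigma_p$ lifts to some $g\in k[t]$, and $N(t'-g)=\pm m(t,g)$ is divisible by $p$, so $t'-g$ cannot be a unit at every place above $p$.
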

	\begin{proof}
		Let $\beta \in \bar{k}$ be a
		root of~$p$. Assume that $p = (t-\beta)\tilde{p}$, where $\tilde{p}\in
		k(\beta)[t]$ and $\tilde{p}(\beta) \neq 0$.  Then:
		\begin{equation}\label{EQ:diffroot}
			\kappa(p)= - \beta'\tilde{p} + (t-\beta)\kappa(\tilde{p}),
			\quad
			\partial_t(p)= \tilde{p} + (t-\beta)\partial_t(\tilde{p}).
		\end{equation}
		
		Recall that $t'$ is integral over $k[t]$. For any place $P$ lying above $p$, passing to the residue field $\Sigma_P$ yields
		\[
		\overline{p'}=\overline{\kappa(p)}+\overline{\partial_t(p)}\,\overline{t'}.
		\]
		Since $\Sigma_p$ can be considered as a subfield of $\Sigma_P$ through
		$\Sigma_p \cong k(\beta)$, it follows that
		\[
		\overline{p'}= \tilde{p}(\beta)\bigl(\overline{t'} - \beta'\bigr).
		\]	
		Hence $\nu_{P}(p')>0$ if and only if $\overline{t'}=\beta'\in \Sigma_P$.
		
		At first, assume that $\nu_P(p') > 0$ for some
		place $P$ lying above~$p$. Then $\overline{t'} = \beta'$. Since
		$m(t,t')= 0$, taking images into the residue field yields
		$m(\overline{t},\overline{t'})= \bar{0}$, hence $m(\beta, \beta')= 0$,
		i.e., $p$ is special.
		
		Conversely, we assume that $p$ is special. Then $m(\beta,\beta') = 0$, so $\beta'$ is a root of $m(\beta, X)$. Assume $m(\beta, X) = (
		X-\beta')^{s}\,\tilde{m}( X)$, where $\tilde{m}(
		X)\in k(\beta)[X]$ and $\tilde{m}(\beta') \neq 0$.
		By~\cite[Thm 3.3.7]{stichtenoth2009}, there is a place $P$ of $K$ lying
		above $p$ such that $\overline{t'} = \beta' \in \Sigma_P$. Thus $\nu_{P}(p') > 0$.
	\end{proof}
	This theorem indicates that Definition~\ref{DEF:special} generalizes the notion of
	special polynomials in~\cite[Section 3.4]{BronsteinBook}.	
	
	\begin{defi}
	A polynomial $q\in k[t]$ is called \emph{normal} if $q$ is squarefree and all
		irreducible factors of $q$ are normal.
	\end{defi}
	
	\begin{prop}\label{PROP:normal}
		Let $q\in k[t]$ be a normal polynomial and $p$ be an irreducible factor of~$q$.
		Then $\nu_{P}(q')=0$ for every place $P$ lying above~$p$.
	\end{prop}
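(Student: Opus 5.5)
Write $q = p\,r$ with $r\in k[t]$. Because $q$ is squarefree, $p\nmid r$. The Leibniz rule gives
\[
q' = p' r + p r'.
\]
Let $P$ be any place of $K$ lying above $p$. I will show that the first term has order exactly $0$ at $P$ and the second has strictly positive order. Property (iii) of the order function then gives $\nu_P(q')=0$.

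For the first term, $r$ is coprime to $p$, so $\nu_p(r)=0$ in $k(t)$. Equation~\eqref{EQ:val} then gives $\nu_P(r)=r_P\nu_p(r)=0$. Next, $p$ is an irreducible factor of a normal polynomial, so $p$ is normal, i.e. not special. We know $p'=\kappa(p)+\partial_t(p)t'$ is integral over $k[t]$, so $\nu_P(p')\ge 0$. Theorem~\ref{THM:vspecial} says that if $\nu_P(p')>0$ for some place above $p$ then $p$ is special. Hence $\nu_P(p')=0$, and therefore $\nu_P(p'r)=0$.

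For the second term, $\nu_P(p)=r_P\ge 1$ by \eqref{EQ:val}. Also $r'=\kappa(r)+\partial_t(r)t'$ is integral over $k[t]$, because $t'$ is, so $\nu_P(r')\ge0$. Hence $\nu_P(pr')\ge r_P>0$.

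Since the two orders differ, the equality case of (iii) yields $\nu_P(q')=\min\{0,\nu_P(pr')\}=0$. The only step needing care is the nonnegativity of $\nu_P(p')$ and $\nu_P(r')$. This follows directly from integrality of $t'$ over $k[t]$: the order is nonnegative at every finite place, and in particular at $P$. The rest is a direct application of Theorem~\ref{THM:vspecial} in contrapositive form.
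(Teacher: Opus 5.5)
Your proof is correct and follows essentially the same route as the paper's: both split $q'=p'r+pr'$, use Theorem~\ref{THM:vspecial} to get $\nu_P(p')=0$, and use integrality of $r'$ to get $\nu_P(pr')>0$. Your version also spells out two steps the paper leaves implicit, namely $\nu_P(p')\ge 0$ from integrality and $\nu_P(r)=0$ via \eqref{EQ:val}.
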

	\begin{proof}
		Write $q=p\tilde{q}$ with $\gcd(p,\tilde{q})=1$. Then
		$q'=p'\tilde{q}+p\tilde{q}'$. For any place $P$ of $K$ lying above~$p$,
		$\nu_P(p')=0$ by Theorem~\ref{THM:vspecial}. So
		$\nu_P(p'\tilde{q})=0$. Note that $\nu_{P}(p\tilde{q}')>0$ since $\tilde{q}'$ is integral over $k[t]$. Hence $\nu_{P}(q')=0$.
	\end{proof}

	We now introduce the notion of special places in~$K$.
	
	\begin{defi}
		Let $P$ be a finite place of $K$ lying above $p\in k[t]$. We call $P$ a \emph{special place} if
		$\nu_{P}(p') > 0$, and \emph{special of the zeroth kind} if $0<\nu_{P}(p')<\nu_{P}(p)$. We call that $P$ is \emph{normal} if $\nu_P(p') = 0$.
	\end{defi}
	
	Extending Bronstein’s~\cite{BronsteinBook} theory of special polynomials, we encounter places of the zeroth kind, which cannot appear in monomial extensions. The name ``the zeroth kind '' is in contrast to Bronstein’s first-kind case $\nu_P(p')=\nu_P(p)$, because the zeroth-kind case corresponds to $\nu_{P}(p')<\nu_{P}(p)$. Places of the zeroth kind allow effective control of orders after differentiation and are the only ones involved in the special reduction to be developed in Section~\ref{SECT:CT}.
	%which we call \emph{special of the zeroth kind}, in contrast to Bronstein’s first-kind case $\nu_P(f')=\nu_P(f)$.
	If $P$ is a special place lying above~$p$, then $p$ is also special by Theorem~\ref{THM:vspecial}.

	\begin{prop}\label{PROP:order}
		Let $P$ be a finite place of $K$ lying above $p\in k[t]$, and let $r$ be the ramification index of~$P$.
		Then, for any $f \in K\setminus\{0\}$:
		\begin{itemize}
			\item[(i)] If $\nu_P(f)= 0$, then $\nu_{P}(f') \ge  \min\{0,\nu_P(p')-r+1\}$.
			\item[(ii)] If $\nu_P(f)\ne 0$, then $\nu_{P}(f') \ge \min\{\nu_{P}(f), \nu_{P}(f) +\nu_P(p') - r \}$, and equality holds if $P$ is normal or special of the zeroth kind.
			
		\end{itemize}
	\end{prop}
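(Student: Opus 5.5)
The plan is to reduce everything to a statement about units and a uniformizer at $P$, and then to expand $f$ as a power of a uniformizer times a unit. Fix a uniformizer $u$ of $\mathcal{O}_P$, so that $\nu_P(u)=1$. By~\eqref{EQ:val} we have $\nu_P(p)=r$, so we can write $u^r = p\,e$ with $e$ a unit of $\mathcal{O}_P$. The central auxiliary claim is part (i) itself:
\[
\nu_P(w')\ge \min\{0,\nu_P(p')-r+1\}\quad\text{for every unit } w\in\mathcal{O}_P .
\]
Everything else follows from it by bookkeeping with the rules (ii) and (iii) for $\nu_P$.

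\emph{Step 1 (the uniformizer).} Differentiate $u^r=pe$ to get
\[
r u^{r-1}u' = p' e + p e'.
\]
The first term has order exactly $\nu_P(p')$. By (i) applied to $e$, the second term has order at least $r+\min\{0,\nu_P(p')-r+1\}$. This gives
\[
\nu_P(u')\ge \min\{1,\ \nu_P(p')-r+1\}.
\]
Moreover, if $\nu_P(p')<r$, then the second term has strictly larger order than the first. In that case the bound is an equality, namely $\nu_P(u')=\nu_P(p')-r+1$. The condition $\nu_P(p')<r$ holds precisely when $P$ is normal ($\nu_P(p')=0<r$) or special of the zeroth kind.

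\emph{Step 2 (part (ii)).} Write $f=u^n w$ with $n=\nu_P(f)\ne0$ and $w$ a unit. Then
\[
f' = n u^{n-1}u'w + u^n w'.
\]
By Step 1, the first term has order at least $\min\{n,\ n+\nu_P(p')-r\}$. By (i), the second term has order at least $n+\min\{0,\nu_P(p')-r+1\}$, which is also at least this minimum. This proves the inequality. In the normal and zeroth-kind cases the first term has order exactly $n+\nu_P(p')-r$, using $n\ne0$ and characteristic $0$. This order is strictly smaller than both $n$ and $n+\nu_P(p')-r+1$, so by the strict form of (iii) equality holds.

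\emph{Step 3 (part (i), the main obstacle).} We must control derivatives of arbitrary units, where the derivation acts nontrivially on both $k$ and $t$. I would first try to show that $\mathcal{O}_P$ is the localization at $P$ of the integral closure $R$ of $k[t]$ in $K$. Then every $w\in\mathcal{O}_P$ is $a/b$ with $a,b\in R$ and $b\notin P$. Since $\nu_P(b)=0$, the quotient rule gives
\[
\nu_P(w')\ge\min\{\nu_P(a'),\nu_P(b')\},
\]
so it suffices to bound $\nu_P(a')$ for $a\in R$. For $a\in k[t,t']$ one has $a'=\kappa(a)+\partial_t(a)\,t'+\partial_{X}(a)\,t''$, where $\partial_X$ is the formal derivative in the second variable. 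Here $t'$ is integral, so everything reduces to bounding $\nu_P(t'')$.

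From $m(t,t')=0$ we get
\[
t''\,\partial_X m(t,t') = -\kappa(m)(t,t')-\partial_t m(t,t')\,t'.
\]
I would try to compare the orders of $\partial_X m(t,t')$ and of $p'=\kappa(p)+\partial_t(p)\,t'$ near $P$, using the ramification structure: $\nu_P(\partial_X m(t,t'))$ is governed by the different, which involves $r-1$. The hope is to get $\nu_P(t'')\ge\min\{0,\nu_P(p')-r+1\}$.

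For general $a\in R\setminus k[t,t']$ one has $a=c/D$ with $D\in k[t]$ and $c\in k[t,t']$, and the same estimate must survive division by $D$. This is where the argument is genuinely delicate. If it resists, the fallback is to work in the completion $\Sigma_P((u))$. There the derivation is continuous, because it is bounded on $u$ and on a lift of $\Sigma_P$, and $\Sigma_P=k(\beta,\overline{t'})$ is generated by elements whose derivatives are integral. One then checks the bound on the generators $\beta$, $t'$ and $u$, and extends it termwise to power series.
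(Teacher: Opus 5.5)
\textbf{There is a genuine gap: part (i) is never proved, and everything else depends on it.}

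Your Steps 1 and 2 are correct as conditional arguments. Granted (i), differentiating $u^r=pe$ gives $\nu_P(u')\ge\min\{1,\nu_P(p')-r+1\}$, with equality when $\nu_P(p')<r$. Writing $f=u^nw$ then yields (ii) together with its equality case. But Step 1 itself invokes (i) for the unit $e$, and Step 3 does not establish (i).

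Your first route in Step 3 is only a hope. For $a\in k[t,t']$ it reduces to bounding $\nu_P(t'')$, which is just the instance $f=t'$ of the proposition and no easier. The different only supplies the lower bound $\nu_P\bigl(\partial_X m(t,t')\bigr)\ge r-1$. Dividing by $\partial_X m(t,t')$ needs an \emph{upper} bound on its order, and that order exceeds $r-1$ whenever $k[t,t']$ is not integrally closed at $P$. You would also need a matching lower bound on $\kappa(m)(t,t')+\partial_t m(t,t')\,t'$. The passage from $k[t,t']$ to the full integral closure is left open as well.

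The fallback is the right idea and is essentially the paper's proof, but it is not carried out, and two of its ingredients are wrong.

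First, $\Sigma_P=k(\beta,\overline{t'})$ is false in general. Take $k=\mathbb{Q}$ with the trivial derivation and $(t')^2=t^2(t+2)$. The unique place above $t$ has residue field $\mathbb{Q}(\sqrt{2})$, generated by the residue of $t'/t$, while $\beta=0=\overline{t'}$. What you actually need is only that $\Sigma_P$ is algebraic over $k$, so its lift into the completion is closed under the derivation. Then $w=\sum_{i\ge0}c_iu^i$ gives $\nu_P(w')\ge\min\{0,\nu_P(u')\}$.

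Second, checking the bound on $u$ via Step 1 is circular, since Step 1 uses (i) for $e$. You can break the circle by expanding $e=\sum_i e_iu^i$ as well and rewriting the differentiated relation as
\[
u'\Bigl(ru^{r-1}-p\sum_{i\ge1}ie_iu^{i-1}\Bigr)=p'e+p\sum_{i\ge0}e_i'u^i .
\]
Here the bracket has order exactly $r-1$, which gives the bound on $\nu_P(u')$ without using (i).

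The paper sidesteps this issue. It passes to $\bar k$ and uses $p^{1/r}$ as the uniformizer, so $e=1$. It expands $f=\sum_i b_ip^{(v+i)/r}$ with $b_i\in\bar k$, differentiates termwise, and compares the two candidate lowest terms $p^{v/r}$ and $p'p^{(v-r)/r}$. The coefficient $v/r$ of the latter vanishes when $v=0$, which is exactly where the $+1$ in (i) comes from. This yields (i) and (ii) at once, with no separate lemma on units.
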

	\begin{proof}
		Note that $\nu_{P}(p)=r$ by \eqref{EQ:val}. Let $v = \nu_{P}(f)$. Assume that $f$ admits a
		local expansion of the form
		$f = b_{0} p^{\frac{v}{r}} + b_{1} p^{\frac{v+1}{r}} + \cdots$,
		where $b_i\in \bar{k}$ and $b_{0} \neq 0$. Differentiating $f$ gives
		\[
		f'
		= b_{0}' p^{\frac{v}{r}} + b_{1}' p^{\frac{v+1}{r}} + \cdots
		+ p'\left(
		\frac{v}{r} b_{0} p^{\frac{v-r}{r}}
		+ \frac{v+1}{r} b_{1} p^{\frac{v+1-r}{r}}
		+ \cdots
		\right) .
		\]
		
		(i) If $v=0$, then the order of $f'$ is no less than the respective orders of $p^0$ and $p'p^{\frac{1-r}{r}}$.
		%	 Possible nonzero terms of the
		%	lowest order are between $p^{0}$ and $p'p^{\frac{1-r}{r}}$.
		
		(ii) If $v\ne 0$, then the order of $f'$ is no less than the respective orders of $p^{\frac{v}{r}}$ and $p'p^{\frac{v-r}{r}}$.
		%	 possible terms of the
		%	lowest order are between $p^{\frac{v}{r}}$ and $p'p^{\frac{v-r}{r}}$.
		Hence $\nu_{P}(f')\ge \min\{v,v+\nu_P(p')-r \}$. If $P$ is normal or special of the zeroth kind, then $\nu_P(p')<r$. Therefore,  $p'p^{\frac{v-r}{r}}$ dominates the order of $f'$, because its order is strictly less than $p^{\frac{v}{r}}$. Thus $\nu_{P}(f')=v+\nu_{P}(p')-r$.
	\end{proof}
	
	\begin{corollary}\label{COR:constant}
		If all special places of $K$ are of the zeroth kind, then the field of constants of $K$ coincides with $C_k$.
	\end{corollary}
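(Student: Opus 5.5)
Suppose $c \in K$ is a constant, so $c' = 0$, and suppose for contradiction that $c \notin k$. We show first that $c$ must lie in $k$. Since $k$ is algebraically closed in $K$, it then follows that $c \in k \cap C_K = C_k$.

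\emph{Finite places.} Let $P$ be a finite place of $K$, lying above some irreducible $p \in k[t]$, with ramification index $r$. Then $P$ is either normal, special of the zeroth kind, or special but not of the zeroth kind. By hypothesis the last case does not occur, so every finite place is normal or special of the zeroth kind. Suppose $\nu_P(c) = v \neq 0$. By Proposition~\ref{PROP:order}(ii), with equality because $P$ is normal or of the zeroth kind,
\[
\nu_P(c') = v + \nu_P(p') - r .
\]
Here $0 \le \nu_P(p') < r$, so the right-hand side is finite. Hence $c' \neq 0$, a contradiction. Therefore $\nu_P(c) = 0$ for every finite place $P$. In particular $c$ has no poles at finite places, so $c$ is integral over $k[t]$.

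\emph{From integrality to a polynomial.} Take the monic minimal polynomial $\mu(X) = X^n + a_{n-1}X^{n-1} + \dots + a_0$ of $c$ over $k(t)$. By integrality, $\mu \in k[t][X]$. Differentiating $\mu(c) = 0$ and using $c' = 0$ gives
\[
a_{n-1}' c^{n-1} + \dots + a_0' = 0 .
\]
This polynomial in $c$ has degree less than $n$, so by minimality every $a_i' = 0$. Thus each $a_i$ is a constant of $k(t)$ lying in $k[t]$.

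\emph{Main obstacle: showing the $a_i$ lie in $k$.} We must rule out constants of $k(t)$ that are nonconstant polynomials in $t$. Suppose $a \in k[t] \setminus k$ and $a' = 0$. Let $p$ be an irreducible factor of $a - \lambda$ for a suitable $\lambda \in k$, chosen so that $a - \lambda$ has a root; for instance take $\lambda = 0$ if $a$ is nonconstant. Pick a place $P$ above $p$. Since $(a - \lambda)' = -\lambda'$, we take $\lambda$ constant, e.g.\ $\lambda = 0$, so that $a' = 0$ still holds. Then $\nu_P(a) > 0$ while $a' = 0$. This contradicts Proposition~\ref{PROP:order}(ii), exactly as in the finite-place step. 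Hence every $a_i \in C_k$. So $c$ is algebraic over $k$, and therefore $c \in k$.

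The delicate point is that we never need the infinite places. The argument applies Proposition~\ref{PROP:order}(ii) only at finite places, first to the constant $c$ itself and then to the coefficients $a_i$.
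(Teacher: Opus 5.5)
Your first step is sound. Since $p'$ is integral over $k[t]$, every finite place is normal or special, hence (by hypothesis) normal or of the zeroth kind. The equality case of Proposition~\ref{PROP:order}(ii) then gives $\nu_P(c)=0$ at every finite place.

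\textbf{The gap is the minimal-polynomial step.} The $a_i$ lie in $k[t]$, but $a_i'=\kappa(a_i)+\partial_t(a_i)\,t'$ lies in $K$, not in $k(t)$. When $[K:k(t)]\ge 2$ the subfield $k(t)$ is not closed under the derivation, and $a_i'\in k(t)$ only if $a_i\in k$. So $\sum_i a_i'c^i=0$ is a relation with coefficients in $K$. Minimality of $\mu$ over $k(t)$ says nothing about it, and $a_i'=0$ does not follow. Your third paragraph is fine in itself, but it then has nothing to act on.

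What you have actually shown is only that $c$ is a unit of $I_K$. Units of $I_K$ outside $k$ do occur when several places lie above $t^{-1}$: for $m=X^2-(t^2+1)$, the element $t+t'$ is such a unit, since $(t'+t)(t'-t)=1$. So knowing that the single element $c$ is constant, used only at finite places, cannot by itself give $c\in k$.

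\textbf{The repair is to use all the shifts $c-\lambda$}, which are again constants. Since $k$ has characteristic $0$, $\mathbb{Q}\subseteq C_k$. For each $\lambda\in\mathbb{Q}$, your first step applies to $c-\lambda\neq 0$, so both $c-\lambda$ and $(c-\lambda)^{-1}$ are integral over $k[t]$. Let $\chi\in k[t][X]$ be the characteristic polynomial of $c$ over $k(t)$. Then $N_{K/k(t)}(\lambda-c)=\chi(\lambda)$ is a unit of $k[t]$, i.e.\ lies in $k$. Write $\chi=\sum_j t^j\chi_j(X)$ with $\chi_j\in k[X]$. Each $\chi_j$ with $j\ge 1$ vanishes on all of $\mathbb{Q}$, so $\chi\in k[X]$. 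Hence $c$ is algebraic over $k$, and so $c\in k$.

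This repair keeps your correct point that no derivative estimate at infinity is needed. Alternatively, count places: each $c-\lambda\notin k$ has a zero, which must sit at one of the finitely many infinite places, and distinct $\lambda$ cannot share a zero.

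Your caution about infinite places is well placed. The paper's proof takes a pole $P$ of $c$ and applies Proposition~\ref{PROP:order} to it, which tacitly assumes $P$ is finite. The same shift by $\lambda$ closes that gap too: choose $\lambda$ so that $c-\lambda$ does not vanish at any infinite place, and use a pole of $(c-\lambda)^{-1}$ instead.
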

	\begin{proof}
		Let $c\in K$ be a constant, i.e., $c'=0.$ Assume $c\notin k$, then there is a place $P$ of $K$ such that $\nu_{P}(c)<0$ by~\cite[page 9, Corollary 3]{chevalley1951}. Note that $P$ is either normal or special of the zeroth kind. It follows from Proposition~\ref{PROP:order} that $\nu_{P}(0)=\nu_{P}(c')$ is a finite number, a contradiction.
	\end{proof}
	
	We also need an analogue of Proposition~\ref{PROP:order} at infinite places as a preparation for studying elementary integrablity in Section~\ref{SECT:CT}.
	
	\begin{prop}\label{PROP:orderinf}
		Let $P$ be an infinite place of $K$ with ramification index~$r$. Then, for any $f\in K\setminus\{0 \}$:
		\begin{itemize}
			\item[(i)] If $\nu_{P}(f)=0$, then $\nu_{P}(f')\ge \min\{0,\nu_{P}(t')+r+1\}$.
			\item[(ii)] If $\nu_{P}(f)\ne 0$, then $\nu_{P}(f')\ge \min\{\nu_{P}(f),\nu_{P}(f)+\nu_{P}(t') +r\}$.
		\end{itemize}
	\end{prop}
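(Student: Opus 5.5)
We mirror the proof of Proposition~\ref{PROP:order}, with the local parameter $p$ replaced by $q:=t^{-1}$, which generates the infinite place of $k(t)$. By \eqref{EQ:val} we have $\nu_P(q)=r\,\nu_\infty(t^{-1})=r$, so $\nu_P(t)=-r$. Let $v=\nu_P(f)$. As in the finite case, we work with a local (Puiseux-type) expansion
\[
f=b_0q^{\frac{v}{r}}+b_1q^{\frac{v+1}{r}}+\cdots,\qquad b_i\in\bar k,\ b_0\ne0,
\]
which is the same device the paper uses for Proposition~\ref{PROP:order}. Here the fractional powers $q^{j/r}$ have order $j$ at $P$.

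Differentiating term by term gives
\[
f'=\sum_i b_i'q^{\frac{v+i}{r}}+q'\sum_i\frac{v+i}{r}\,b_iq^{\frac{v+i-r}{r}}.
\]
The key computation is $q'=(t^{-1})'=-t'/t^2$. Hence
\[
\nu_P(q')=\nu_P(t')-2\nu_P(t)=\nu_P(t')+2r,
\]
and the $i$-th term of the second sum has order at least $(v+i)+\nu_P(t')+2r-r=v+i+\nu_P(t')+r$. The first sum has order at least $v$, since constants in $\bar k$ have order $0$ and the derivatives $b_i'$ also lie in $\bar k$.

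(ii) If $v\neq0$, the lowest possible orders are $v$ from $b_0'q^{v/r}$ and $v+\nu_P(t')+r$ from $\frac{v}{r}b_0q'q^{\frac{v-r}{r}}$. The ultrametric inequality (iii) then gives $\nu_P(f')\ge\min\{v,\,v+\nu_P(t')+r\}$.

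(i) If $v=0$, the coefficient $\frac{v}{r}b_0$ vanishes. The second sum therefore starts at $i=1$ with order $1+\nu_P(t')+r$, while the first sum has order at least $0$. This yields $\nu_P(f')\ge\min\{0,\nu_P(t')+r+1\}$.

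The main obstacle is justifying termwise differentiation of the expansion and the bound $\nu_P(\sum_{i\ge N}\cdots)$ on the tail. Note that $t'$ need not have nonnegative order at infinite places, so $q'$ may have very negative order, but the shift by $\nu_P(q')$ is uniform in $i$. To make this rigorous without Puiseux series, I would write $f=q^{v/r}u$ with a unit $u$, or more concretely $f=\pi^v u$ for a uniformizer $\pi$ with $\pi^r=q\cdot(\text{unit})$. I would then apply the finite-place argument verbatim with $p$ replaced by $q$, accepting the same level of rigor as the proof of Proposition~\ref{PROP:order}.
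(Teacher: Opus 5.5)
Your proof is correct and is essentially the paper's own argument. The paper also expands $f$ in powers of $t^{-1/r}$ with coefficients in $\bar k$ and differentiates termwise, then compares orders exactly as in Proposition~\ref{PROP:order}. The only difference is cosmetic: it writes the derivative of $t^{-j/r}$ directly in terms of $t'$ rather than through $q'=-t'/t^2$, and both routes give the same orders $v+i+\nu_P(t')+r$. Your caveat about justifying termwise differentiation applies equally to the paper, which takes the expansion for granted at the same level of rigor.
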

	\begin{proof}
		We have $\nu_{P}(t)=-r$ by \eqref{EQ:val}. Let $v=\nu_{P}(f)$. Assume $f$ admits a local expansion at infinity:
		$f=b_0t^{\frac{-v}{r}}+b_1t^{\frac{-v-1}{r}}+\cdots$,
		where $b_i\in \bar{k}$ and $b_0\ne 0$. Differentiating $f$ yields
		\[
		f'
		= b_{0}' t^{\frac{-v}{r}} + b_{1}' t^{\frac{-v-1}{r}} + {\cdots}
		+t'\left(
		\text{\footnotesize $\frac{-v}{r}$}
		b_{0} t^{\frac{-v-r}{r}}
		{+} \text{\footnotesize $\frac{-v-1}{r} b_{1}$}
		t^{\frac{-v-1-r}{r}}
		+ {\cdots}
		\right),
		\]
		from which one can derive assertions (i) and (ii) by a similar argument as in the proof of Proposition~\ref{PROP:order}.
	\end{proof}
	
	\section{Hermite Reduction}\label{SECT:HT}
	
	A core algorithmic technique in symbolic integration is Hermite reduction, which
	decomposes integrands into an integrable part and a
	remainder with controlled poles. It is extended from
	rational functions~\cite{Ostrogradsky1845, Hermite1872} to transcendental elementary functions by
	Risch~\cite{Risch1969, Bronstein1990, BronsteinBook}, and
	to algebraic functions by Trager~\cite{Trager1984, Bronstein1990}
	via integral bases. More recently, Hermite reduction has been
	generalized to D-finite functions~\cite{BCLS2018,Chen2018JSC,vanderHoeven21,CDK2023}.
	
	Throughout this section, let $K=k(t,t')$ be a Weierstrass-like
	extension over~$k$, and let $m\in k[t,X]$ be the monic minimal polynomial of $t'$ with degree $n$ in $X$. Let $\{\omega_1,\dots,\omega_n\}$ be an integral basis of~$K$.
	Then an element $f\in K$ can be written as
	$f=\sum_{i=1}^{n}\frac{f_i}{D}\omega_i$ for some $D,f_i\in k[t]$ with
	$\gcd(D,f_1,\dots,f_n)=1$. Such $D$ is unique up to a nonzero multiplicative element of $k$. We call $D$ the \emph{denominator} of $f$ w.r.t. $\{\omega_1,\dots,\omega_n\}$.
	Write $D=D_ND_S$, where all the irreducible factors of $D_N$ are normal and those of $D_S$ are special.
	Note that $D_N$ and $D_S$ are coprime. See Appendix \ref{APPEN:splitting} for an algorithm to compute $D_N$ and $D_S$ without irreducible factorization.
	
	By the extended Euclidean algorithm, one can uniquely decompose $f$ as $\cN (f)+\cS (f)$,
	where
	\[
	\cN (f)=\sum_{i=1}^{n}\frac{a_i}{D_N}\omega_i,\,\, \text{and}\,\, \cS (f)=\sum_{i=1}^{n}\frac{b_i}{D_S}\omega_i
	\]
	for some $a_i,b_i\in k[t]$ with $\deg(a_i)<\deg(D_N)$. We call $\cN (f)$ and $\cS (f)$ the \emph{normal and special parts} of~$f$, respectively. We say that
	$\cN (f)+\cS (f)$ is the \emph{canonical representation} of $f$ w.r.t. $\{\omega_1,
	\dots, \omega_n\}$. Both $\cN$ and $\cS$ can be regarded as $k$-linear operators on $K$.
	
	The idea of Hermite reduction is to decrease the multiplicity of factors of $D_N$ modulo derivatives.
	We begin with a technical lemma analogous to a result in~\cite[Section
	4.2]{Trager1984}. It will be used later to guarantee the
	correctness of Hermite reduction.
	
	\begin{lemma}\label{LEM: localbasis}
		Let $v\in k[t]$ be a normal polynomial and $\mu > 1$ be an integer.
		Set $\psi_i :=  v^{\mu}\bigl(v^{1-\mu}\omega_i\bigr)'$ with $i=1,\dots,n$.
		Then $\{\psi_1,\ldots,\psi_n\}$ is a local integral basis at $v$.
	\end{lemma}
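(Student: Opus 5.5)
The plan is to compute $\psi_i$ explicitly and then check two things at each irreducible factor $p$ of $v$: that every $\psi_i$ is integral at $p$, and that the $\psi_i$ are linearly independent modulo $p$. Expanding the derivative gives
\[
\psi_i=(1-\mu)\,v'\,\omega_i+v\,\omega_i' .
\]
Throughout, fix an irreducible factor $p$ of $v$ and a place $P$ above $p$ with ramification index $r$. Since $v$ is squarefree, $\nu_P(v)=r$. Since $v$ is normal, Proposition~\ref{PROP:normal} gives $\nu_P(v')=0$, and $P$ is a normal place by Theorem~\ref{THM:vspecial}, so $\nu_P(p')=0$.

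\emph{Step 1 (integrality).} The term $v'\omega_i$ is integral, because $v'$ and $\omega_i$ are both integral over $k[t]$. For $v\omega_i'$ we split on $s:=\nu_P(\omega_i)\ge 0$, using Proposition~\ref{PROP:order}.
If $s=0$, part~(i) gives $\nu_P(\omega_i')\ge \min\{0,1-r\}=1-r$, so $\nu_P(v\omega_i')\ge 1$.
If $s>0$, part~(ii) with equality at the normal place gives $\nu_P(\omega_i')=s-r$, so $\nu_P(v\omega_i')=s\ge0$.
In both cases $\psi_i$ is integral at $p$.

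\emph{Step 2 (independence modulo $p$).} The integral elements at $p$ form a free $\mathcal{O}_p$-module of rank $n$, and $\mathcal{O}_p$ is a DVR with uniformizer $p$. Hence $n$ integral elements form a basis as soon as their images are linearly independent over the residue field $\mathcal{O}_p/p$. It therefore suffices to show the following: if $c_1,\dots,c_n\in k[t]$ are not all divisible by $p$, then $g:=\sum c_i\psi_i$ is not in $p$ times the integral module at $p$. In other words, some place $P$ above $p$ must have $\nu_P(g)<r$.

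\emph{Step 3 (the key computation).} Put $f=\sum c_i\omega_i$. The product rule gives
\[
g=v^{\mu}\bigl(v^{1-\mu}f\bigr)'-v\sum_i c_i'\,\omega_i .
\]
Each $c_i'=\kappa(c_i)+\partial_t(c_i)t'$ is integral, so the second term has $\nu_P\ge r$ at every $P$ above $p$.

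Since $\{\omega_i\}$ is an integral basis and not all $c_i$ are divisible by $p$, the element $f/p$ is not integral at $p$. Hence there is a place $P$ above $p$ with $0\le s:=\nu_P(f)<r$. At this $P$ we have $\nu_P(v^{1-\mu}f)=(1-\mu)r+s\le s-r<0$, which is nonzero because $\mu>1$. Proposition~\ref{PROP:order}(ii), with equality since $P$ is normal, then gives
\[
\nu_P\bigl((v^{1-\mu}f)'\bigr)=(1-\mu)r+s-r ,
\]
and therefore $\nu_P\bigl(v^\mu (v^{1-\mu}f)'\bigr)=s<r$. Consequently $\nu_P(g)=s<r$, which is exactly what Step~2 requires. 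Doing this at every factor $p$ of $v$ proves the lemma.

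The main obstacle is Step~3, specifically making sure the hypothesis $\mu>1$ forces $\nu_P(v^{1-\mu}f)\neq 0$. This is what allows the exact-equality case of Proposition~\ref{PROP:order} to be used, and that case is available only because $P$ is normal. A secondary point is the DVR/Nakayama argument in Step~2: it reduces "basis" to "independence modulo $p$", and should be stated carefully.
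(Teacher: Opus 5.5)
Your proof is correct and follows essentially the same route as the paper. Integrality comes from Proposition~\ref{PROP:order} together with $\nu_P(v')=0$. The basis property comes from the identity $\sum_i c_i\psi_i = v^{\mu}(v^{1-\mu}f)' - v\sum_i c_i'\omega_i$, applied at a place where $\nu_P(f)<r$, using the equality case of Proposition~\ref{PROP:order}(ii), which is available because $\mu>1$ makes the order nonzero. The only difference is presentational: you argue directly, prime by prime, via independence modulo $p$ (Nakayama), whereas the paper argues by contradiction from an integral element $F=\frac{1}{v}\sum_i c_i\psi_i$.
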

	\begin{proof}
		Let $p$ be an arbitrary irreducible factor of~$v$, and let $Q$ be a place of $K$
		lying above $p$ with ramification index $r_Q$. Then $Q$ is normal.  Since each $\omega_i$ is
		integral over $k[t]$, $\nu_Q(\omega_i')>-r_Q$ by Proposition~\ref{PROP:order}. Note that $v$ is normal. Then $\nu_Q(v')=0$
		by Proposition~\ref{PROP:normal}. Therefore, $\nu_Q(\psi_i)\ge0$ because $\psi_i=v\omega_i'-(\mu-1)v'\omega_i$. Consequently,
		$\psi_i$ is integral at~$p$. It follows that $\psi_i$ is integral at~$v$.
		
		There are two ways how $\{\psi_1,\ldots,\psi_n\}$ can fail to be
		a local integral basis at $v$: (i) $\psi_1,\ldots,\psi_n$ are
		$\mathcal{O}_v$-linearly dependent; (ii) $\psi_1,\ldots,\psi_n$ are
		$\mathcal{O}_v$-linearly independent, but there exists an element
		integral at $v$ which is not a $\mathcal{O}_v$-linear combination of
		$\psi_1,\ldots,\psi_n$. In both cases, there is an $F\in K$, integral at~$v$,
		such that $F=\frac{1}{v}\sum_{i=1}^{n}c_i\psi_i$, where $c_1,\dots,c_n\in
		k[t]$ are not all zero and $v \nmid c_j$ for some $j$. We
		will derive a contradiction from the existence of such element~$F$.
		
		Let $G = \sum_{i=1}^{n} c_i' \omega_i$. Note that $c_i'$ is integral over $k[t]$ since $c_i\in k[t]$, so is $G$. We have:
		\begin{equation}\label{EQ:integral}
			F + G
			= v^{\mu-1}\sum_{i=1}^{n}\!
			\left( c_i (v^{1-\mu}\omega_i)' + c_i' v^{1-\mu}\omega_i \right)
			= v^{\mu-1}\!\sum_{i=1}^{n} (c_i v^{1-\mu}\omega_i)' .
		\end{equation}
		Let $H = \sum_{i=1}^{n} c_i v^{1-\mu}\omega_i=v^{2-\mu}\sum_{i=1}^{n}\frac{c_i}{v} \omega_i$. Then $H\ne
		0$. Since $v\nmid c_j$, some irreducible factor $p$ of $v$ appears in the denominator of $\sum_{i=1}^{n}\frac{c_i}{v} \omega_i$. Then there exists a place $P$ of $K$
		lying above $p$ such that $\nu_P(H) <
		\nu_P\bigl(v^{2-\mu}\bigr)$, where $\nu_P$ denotes the order function at~$P$. Let
		$r_P$ be the ramification index of~$P$. Then $\nu_{P}(H)<(2-\mu)r_P\le 0$
		by $\mu>1$. Then $\nu_P(H')<(1-\mu)r_P=\nu_P\bigl(v^{1-\mu}\bigr)$ by Proposition~\ref{PROP:order}.
		
		However, $H' = v^{1-\mu}(F+G)$ by~\eqref{EQ:integral}, which yields a contradiction since $F+G$ is integral at $v$.
	\end{proof}
	
	We now describe the Hermite reduction in $K$.
	For convenience, assume that $f=\cN (f)=\sum_{i=1}^{n}\frac{f_i}{D}\omega_i\in K$.
	Then all irreducible factors of $D$ are normal.
	
	Let $D=u v^{\mu}$, where $\mu>1$ is an integer, $v$ is squarefree, $\gcd(u,v)=1$ and all irreducible factors of $u$ have multiplicities less than $\mu$. Set $\tilde f:=fD$, which is integral over $k[t]$, and
	define $\psi_i:=\bigl(v^{1-\mu}\omega_i\bigr)'D$.  By Lemma~\ref{LEM:
		localbasis}, $\{\psi_1,\dots,\psi_n\}$ is a local integral basis at~$v$.
	Let us decrease the multiplicity of $v$. We
	compute $c_1,\dots,c_n\in k(t)$ such that $\tilde{f}=\sum_{i=1}^{n}c_i\psi_i$.
	Then $c_i\in \mathcal{O}_v$. For each~$i$, we can find $r_i\in k[t]$ such that $\deg_t r_i<\deg_t v$ and $r_i\equiv c_i \mod v$. Set $\tilde{g}=\sum_{i=1}^{n}r_i\psi_i$. Then $\tilde{f}-\tilde{g}=vR$ for some $R\in K$ which is integral at~$v$. Hence
	\[
	f=\frac{\tilde{g}+vR}{D}=\sum_{i=1}^{n}r_i(v^{1-\mu}\omega_i)'+\frac{R}{uv^{\mu-1}}.
	\]
	Using integration by parts,
	\begin{equation}\label{EQ:reduction}
		f=\sum_{i=1}^{n}\Bigl(\frac{r_i}{v^{\mu-1}}\omega_i\Bigr)'+\frac{R}{uv^{\mu-1}}-\sum_{i=1}^{n}\frac{r_i'}{v^{\mu-1}}\omega_i.
	\end{equation}
	Let $g=\sum_{i=1}^{n}\frac{r_i}{v^{\mu-1}}\omega_i$. Since $r_i'$ is integral
	over $k[t]$, the denominator of $f-g'$
	w.r.t. $\{\omega_1,\dots,\omega_n\}$ has multiplicity less than $\mu$ at~$v$ by \eqref{EQ:reduction}.
	
	However, the derivative $g'$ may introduce new factors of the denominator since $\omega_i'$ does
	not need to be integral over $k[t]$. Denote $\vec{\omega}=(\omega_1,\dots,\omega_n)^{\tau}$, where $\tau$ denotes the transpose of a vector.
	Let $e\in k[t]$ and $M=(m_{i,j})_{i,j=1}^n\in k[t]^{n\times n}$ be such that
	\[
	e\bigl(\vec{\omega}\bigr)'=M\vec{\omega}
	\]
	with $\gcd(e, m_{1,1}, \dots, m_{n,n}) = 1$. Then $e$ is unique up to a nonzero multiplicative element of $k$. We call $e$ the \emph{differential denominator} of $\{\omega_1,\dots,\omega_n\}$. Denote
	$\vec{r}=(r_1,\dots,r_n)^{\tau}$.
	Then $g=\frac{\vec{r}^{\tau}\vec{\omega}}{v^{\mu-1}}$. A direct calculation shows
	that
	\begin{equation}\label{EQ:HermiteG}
		g'=\underbrace{\frac{\bigl(\vec{r}^{\tau}\bigr)' \vec{\omega}}{v^{\mu-1}}+\frac{(1-\mu)v'\vec{r}^{\tau}\vec{\omega}}{v^{\mu}}}_{A}+\frac{\vec{r}^{\tau} M \vec{\omega}}{ev^{\mu-1}},
	\end{equation}
	in which $e$ may introduce new factors of the denominator.
	
	\begin{lemma}
		The differential denominator $e\in k[t]$ of $\{\omega_1,\dots,\omega_n\}$ is squarefree.
	\end{lemma}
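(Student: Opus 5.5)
The approach is to bound the pole order of each $\omega_i'$ at every place above an irreducible $p\in k[t]$ by the ramification index $r_P$. From that bound we read off that $p^2$ cannot divide $e$. Fix an irreducible $p\in k[t]$ and suppose, for contradiction, that $p^2\mid e$.

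\emph{Step 1: a valuation bound for each $\omega_i'$.} By Proposition~\ref{PROP:order}, for every place $P$ above $p$ we have $\nu_P(\omega_i')>-r_P$. Indeed, if $\nu_P(\omega_i)=0$, part (i) gives $\nu_P(\omega_i')\ge\min\{0,\nu_P(p')-r_P+1\}\ge 1-r_P$, using $\nu_P(p')\ge 0$ since $p'$ is integral. If $\nu_P(\omega_i)>0$, part (ii) gives $\nu_P(\omega_i')\ge \nu_P(\omega_i)+\nu_P(p')-r_P\ge 1-r_P$. Consequently $p\,\omega_i'$ has $\nu_P(p\,\omega_i')=r_P+\nu_P(\omega_i')>0\ge0$ at every place above $p$. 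So $p\,\omega_i'$ is integral at $p$ for every $i$; in fact it lies in the maximal ideal at every place above $p$, though only integrality is needed.

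\emph{Step 2: translate into the denominator.} Write $\omega_i'=\sum_j \frac{m_{i,j}}{e}\omega_j$. Since $\{\omega_j\}$ is an integral basis, it is also a local integral basis at $p$. Hence an element $\sum_j c_j\omega_j$ with $c_j\in k(t)$ is integral at $p$ if and only if all $c_j\in\mathcal{O}_p$. Applying this to $p\,\omega_i'$ from Step~1 gives $p\,m_{i,j}/e\in\mathcal{O}_p$ for all $i,j$, i.e.\ $\nu_p(e)-1\le\nu_p(m_{i,j})$ for all $i,j$.

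\emph{Step 3: conclude.} If $\nu_p(e)\ge2$, then $p$ divides every $m_{i,j}$ as well as $e$. This contradicts $\gcd(e,m_{1,1},\dots,m_{n,n})=1$, read as the gcd over all entries of $M$ (the intended normalization). Hence $\nu_p(e)\le1$ for every irreducible $p$, and $e$ is squarefree. We may need to note that $\omega_i'\in K$ has finitely many poles, so $e$ exists at all; this is implicit in the definition.

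The main obstacle is Step~1 in the case where $P$ is a special place not of the zeroth kind, since then the equality statement in Proposition~\ref{PROP:order} is unavailable. However, only the inequalities are needed, and those hold for every finite place. The one subtle point is the $\nu_P(\omega_i)=0$ case, where part (i) has the $+1$ that still yields $>-r_P$.
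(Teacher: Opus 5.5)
Your proof is correct and follows the paper's argument. Proposition~\ref{PROP:order} gives $\nu_P(p\,\omega_i')>0$ at every place above $p$, the integral-basis property then forces $p\,m_{i,j}/e\in\mathcal{O}_p$, and $p^2\mid e$ would contradict the gcd normalization; the paper phrases this last step as $e=e_0p$ with $m_{i,j}/e_0\in\mathcal{O}_p$, hence $p\nmid e_0$. One small slip: in the case $\nu_P(\omega_i)>0$, part (ii) gives $\min\{\nu_P(\omega_i),\nu_P(\omega_i)+\nu_P(p')-r_P\}$ as the lower bound rather than the second term alone, but both terms are at least $1-r_P$, so your bound still holds.
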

	
	\begin{proof}
		Let $p$ be an irreducible factor of $e$ and write $e=e_0p$. Then
		$p\omega_i'=\sum_{j=1}^n \frac{m_{i,j}}{e_0}\omega_j$. By Proposition~\ref{PROP:order}, $\nu_P(p\omega_i')>0$ for any place $P$ of $K$ lying above~$p$,
		i.e., $p\omega_i'$ is integral at~$p$. Since $\{\omega_1,\ldots,\omega_n \}$ is
		an integral basis, $\frac{m_{i,j}}{e_0}\in\mathcal{O}_p$. Then $p\nmid e_0$ by $\gcd(e,m_{1,1},\dots,m_{n,n})=1$. Hence, $e$ is squarefree.
	\end{proof}
	
	\begin{remark}
		The differential denominator $e$ may have special irreducible factors.
		For example, let $K = \mathbb{Q}(t,t')$ with $(t')^{3} = t$.
		Then $0$ is a special point of $X^{3} - t$, and hence $t$ is special.
		Since $t'' = (t')^{2}/(3t)$, the differential denominator of
		$\{1, t', t''\}$ is divisible by $t$.
		
	\end{remark}
	
	The above reduction does not introduce higher multiplicities.
	The remaining difficulty lies in the possible appearance of new special factors.
	In general, special poles of integrands are hard to handle.
	Fortunately, newly-introduced special factors can be removed by an
	additional modification as follows.
	
	Using notation in \eqref{EQ:HermiteG}, we assume $\gcd(e,v)=d$ and
	$e=e_1d$. Then $e_1$ is coprime with $v$ since $e$ is squarefree. By the
	extended Euclidean algorithm, we can find $\vec{a}^{\tau}=(a_1,\dots,a_n)^{\tau}\in k[t]^{n}$ such that $\deg_t a_i<\deg_t v$ and
	$v^{\mu-1}\vec{a}^{\tau}+\vec{r}^{\tau}=e_1\vec{b}^{\tau}$ for some
	$\vec{b}^{\tau}\in k[t]^{n}$. Set $g_a=\vec{a}^{\tau}\vec{\omega}$. Then
	\begin{equation}\label{EQ:addition}
		g'+g_a'=A+\bigl(\vec{a}^{\tau}\bigr)'\vec{\omega}+\frac{\bigl(\vec{r}^{\tau}+v^{\mu-1}\vec{a}^{\tau} \bigr)M\vec{\omega}}{e}=B+\frac{\vec{b}^{\tau}M\vec{\omega}}{d},
	\end{equation}
	where $B=A+\bigl(\vec{a}^{\tau} \bigr)'\vec{\omega}$. Since $v$ is normal, so is $d$.
	Then $B$ has no special factor in the denominator. Hence we
	eliminate the special poles of $g'$ by adding $g_a'$. Moreover, the denominator
	of $g_a'$ is a factor of~$e$, which is squarefree.
	
	Then by \eqref{EQ:reduction} and \eqref{EQ:addition}, the denominator of $f-(g+g_a)'$ is a product of normal irreducible factors, each of which has the multiplicity less than $\mu$ at~$v$. Repeating the reduction process until
	the integrand has a squarefree denominator, we arrive at:
	
	\begin{thm}\label{THM:HermiteReduction}
		Let $f\in K$. Then there exist $g\in K$, $f_0\in K$ with a normal denominator dividing~$e$, and $h=\sum_{i=1}^{n}\frac{h_i}{D_{*}}\omega_i$ where $D_*$ is normal and coprime with~$e$,
		$\deg h_i<\deg D_{*}$, such that
		\[
		\cN (f) = g'+f_0+h.
		\]
		Moreover, $h$ is unique, and $h=0$ if $f\in K'$.
	\end{thm}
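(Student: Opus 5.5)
The plan is to prove existence by induction on the largest multiplicity among the factors of the denominator of $\cN(f)$, and to prove uniqueness (and the last claim) by a local order argument at the normal places above factors of $D_*$, using Proposition~\ref{PROP:order}.

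\emph{Existence.} Write $\cN(f)=\sum_i \frac{f_i}{D}\omega_i$, where every irreducible factor of $D$ is normal, and let $\mu$ be the largest multiplicity of a factor of $D$. If $\mu>1$, write $D=uv^\mu$ as in the text and perform the reduction step built on Lemma~\ref{LEM: localbasis}. Then apply the correction $g_a$ from \eqref{EQ:addition}. By \eqref{EQ:reduction} and \eqref{EQ:addition}, $f-(g+g_a)'$ has the following properties:
\begin{itemize}
\item its denominator has no special factor;
\item the multiplicity at $v$ drops below $\mu$;
\item the multiplicities at the factors of $u$ are unchanged;
\item any newly introduced factors divide $e$, hence are squarefree since $e$ is squarefree.
\end{itemize}
Iterating therefore lowers the pair (maximal multiplicity, degree of the part of $D$ with that multiplicity), so the process terminates. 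It leaves $\cN(f)=g'+\rho$, where $\rho$ has a squarefree normal denominator $D_0$.

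Now split $D_0=D_1D_*$ with $D_1=\gcd(D_0,e)$. Since $D_0$ is squarefree, $D_*$ is coprime with $e$. Use the extended Euclidean algorithm on the coefficients to write $\rho=\rho_1+\rho_*$, where $\rho_1$ has denominator dividing $D_1$ and $\rho_*$ has denominator dividing $D_*$. Then reduce the numerators of $\rho_*$ modulo $D_*$. The quotient part is a $k[t]$-combination of the $\omega_i$, with denominator $1$, and it is absorbed into $f_0:=\rho_1+(\text{quotient part})$. Set $h$ to be the remainder part, so that $\deg h_i<\deg D_*$.

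\emph{Uniqueness, which is the key step.} Suppose $g_1'+f_{0,1}+h_1=g_2'+f_{0,2}+h_2$. Then $h:=h_1-h_2=G'+F_0$ with $G\in K$ and the denominator of $F_0$ dividing $e$. Let $p\mid D_*$ be irreducible and let $P$ be any place above $p$, with ramification index $r$. Then $P$ is normal, $F_0$ is integral at $p$ because $p\nmid e$, and $\nu_P(h)\ge -r$.
\begin{itemize}
\item If $\nu_P(G)<0$, Proposition~\ref{PROP:order}(ii) (equality in the normal case) gives $\nu_P(G')=\nu_P(G)-r\le -r-1$. This contradicts $\nu_P(G'+F_0)\ge -r$.
\item Hence $\nu_P(G)\ge0$, and parts (i) and (ii) give $\nu_P(G')\ge 1-r$.
\end{itemize}
Thus $\nu_P(ph)>0$ for every $P$ above $p$, i.e. $ph$ is integral at $p$ with positive order. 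Writing $ph=\sum_i \frac{p h_i}{D_*}\omega_i$ in the integral basis, this forces $p$ to divide every numerator of $h$ with respect to $\{\omega_i\}$. That is impossible for a nonzero $h$ with $\deg h_i<\deg D_*$ and reduced denominator, unless the component at $p$ vanishes. Doing this for all $p$ gives $h=0$. The main care point is this local bookkeeping: it requires that $h$ have only simple poles at $p$ while $F_0$ has none.

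\emph{The case $f\in K'$.} If $f=G_0'$, then $\cN(f)=G_0'-\cS(f)$, and $\cS(f)$ is integral at every normal $p$. Hence $h=(G_0-g)'-\cS(f)-f_0$. The same local argument, with $-\cS(f)-f_0$ playing the role of $F_0$ (integral at each $p\mid D_*$), gives $h=0$.
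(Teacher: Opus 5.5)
\paragraph{Gap in the last step of uniqueness.} Your existence part follows the paper. The first half of your uniqueness argument, ruling out $\nu_P(G)<0$ at the normal places above $p\mid D_*$, is exactly the paper's first step. The gap is in the last step.

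From Proposition~\ref{PROP:order} you only get $\nu_P(G')\ge 1-r$, hence $\nu_P(h)\ge 1-r$, i.e.\ $\nu_P(ph)\ge 1$ at every place $P$ above $p$. This does \emph{not} force $p\mid h_i$ unless every place above $p$ is unramified. What you need is $\nu_P(h)\ge 0$, i.e.\ $h$ integral at $p$. Positive order at all places above $p$ only puts $ph$ in $\prod_P P$, not in $p$ times the integral elements. For instance, in $k(t)(\sqrt t)$ with integral basis $\{1,\sqrt t\}$, the element $\sqrt t/t$ has order $-1=1-r$ at the unique place above $t$. Yet its numerator $1$ has degree $<\deg t$ and is not divisible by $t$.

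Notice also that your argument uses $p\nmid e$ only for $F_0$. This hypothesis must also be used for $G'$. At a normal ramified place, Proposition~\ref{PROP:order} shows that derivatives of integral elements really do acquire poles of order $r-1$.

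\paragraph{The missing idea.} Once you know $G$ is integral at $p$, write $G=\sum_i c_i\omega_i$ with $c_i\in\mathcal{O}_p$. Then
\[
G'=\sum_i c_i'\omega_i+\frac{1}{e}\sum_{i,j}c_i m_{i,j}\omega_j .
\]
Each $c_i'$ is integral at $p$: write $c_i=a/b$ with $p\nmid b$, so $c_i'=(a'b-ab')/b^2$. Since also $p\nmid e$, $G'$ is integral at $p$. Hence $h=G'+F_0$ is integral at $p$, so $p\mid h_i$ for all $i$. Since $D_*$ is squarefree and $\deg h_i<\deg D_*$, you get $h=0$.

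This is what the paper does. After showing $\cN(F)=0$, it writes $F=\vec\rho^{\tau}\vec\omega/H$ with $H$ special. It then reads off from the explicit formula for $F'$ that the denominator of $F'$ has no normal factor coprime with $e$. In effect, this also shows that normal places above $p\nmid e$ are unramified, which your final step tacitly assumes. The same repair applies verbatim to your treatment of the case $f\in K'$.
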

	\begin{proof}
		By the preceding discussion, one can find $g\in K$ such that $w=\cN (f)-g'$ has a
		normal denominator. Write $w=\sum_{i=1}^{n}w_i\omega_i$ where $w_i\in k(t)$, then applying the
		extended Euclidean algorithm to the numerators of $w_i$ yields the desired
		decomposition $w=f_0+h$.
		
		It remains to prove the uniqueness of $h$ and verify $h=0$ if $f$ is a derivative in $K$. For proving the uniqueness and in-field integrability of $h$, it suffices to show that $h=0$ if either $h+f_0$ or $h+f_0+\cS (f)$ belongs to $K'$.
		
		Let $F\in K$ be such that $F'=h+f_0+\alpha$, where $\alpha=0$ or $\alpha=\cS (f)$. For any normal irreducible $p\in k[t]$, let $P$ be any place of $K$ lying above $p$ with ramification index
		$r_P$. Then $\nu_{P}(F)\ge 0.$ For, otherwise, $\nu_{P}(F')<-r_P$ by
		Proposition~\ref{PROP:order}, i.e., $\nu_{P}(pF')<0$, which would contradicts to the fact that the multiplicity of $p$ in the denominator of $F'$ is at most one.
		
		The conclusion $\nu_{P}(F)\ge 0$ implies $\cN (F)=0$. Then
		$F$ can be written $\frac{\vec{\rho}^{\tau}\vec{\omega}}{H}$, where $\vec{\rho}^{\tau}\in k[t]^{n}$,
		$\vec{\omega}=(\omega_1,\dots,\omega_n)^{\tau}$ and $H\in k[t]$ has only
		irreducible special factors. Then
		\[
		F'=\frac{\bigl(\vec{\rho}^{\tau}\bigr)'\vec{\omega}}{H}-\frac{H'\vec{\rho}^{\tau}\vec{\omega}}{H^2}+\frac{\vec{\rho}^{\tau}M\vec{\omega}}{eH}.
		\]
		Since $H'$ and all entries of $\bigl(\vec{\rho}^{\tau}\bigr)'$ are integral over $k[t]$, the denominator of $F'$ has no irreducible factor, which is normal and
		coprime with $e$. Hence, $h=0$.
	\end{proof}
	
	We call the element $h$ in Theorem~\ref{THM:HermiteReduction} the \emph{Hermite remainder} of~$f$ (w.r.t. $\{\omega_1,\dots,\omega_n\}$). The Hermite reduction described above naturally translates into an algorithm for computing $f_0$, $g$ and $h$ in Theorem~\ref{THM:HermiteReduction}. The result depends on the choice integral bases.
	%\begin{remark} The additional step for eliminating special factors oftroduced by differentiating
	%	$\omega_i$ essentially transforms the special poles to the poles at
	%	infinity. An alternative way of reduction is to skip this additional step if
	%	the specialness in the denominator is acceptable.
	%\end{remark}
	\begin{example}\label{EX:normal}
		
		Let $k=\mathbb{C}(z)$ equipped with $'=d/dz$ and $k(t,t')$ be a Weierstrass-like extension over $k$, where $(t')^2=4t^3-g_2t-g_3$, $g_2 = 0$ and $g_3 = -4$. We compute the integral of the function
		\[
		f = \frac{(t^2-t-1)t'-4+(2z+2)t^4+(4z+2)t^3-4zt^2-4t}{(t+1)t^2}.
		\]
		Let $\{1,t'\}$ be the chosen integral basis of $k(t,t')$. One can check that $0$ is not a special point but $-1$ is. Hence $t$ is normal and $t+1$ is special. By applying the extended Euclidean algorithm, we obtain the canonical representation of $f$ w.r.t. $\{1,t'\}$: $f=\cN (f) + \cS (f)$,
		where
		\[
		\cN (f) = \frac{-4-t'}{t^2},
		\qquad
		\cS (f) = \frac{2(z+1)t^2+2(2z+1)t-4z+t'}{t+1}.
		\]
		
		To reduce the multiplicities of $t$ in the denominator of $\cN (f)$, we set $\psi_1 = t^2 (t^{-1})' = -t'$ and $\psi_2 = t^2 (t^{-1}t')' = 2t^3-4$.
		By Lemma~\ref{LEM: localbasis}, $\{\psi_1,\psi_2\}$ is a local
		integral basis at $t$.
		We can find $c_1=1$ and $c_2=-\frac{4}{2t^3-4}$ such that the numerator of $\cN (f)$ is $c_1\psi_1 + c_2\psi_2$. Then we can compute $r_1=1$ and $r_2=1$ such that $\deg_t r_i< \deg_t t$ and $r_i\equiv c_i \mod t$. Set $g=\frac{r_1+r_2t'}{t}$, we have
		$\cN (f) = g'-2t$.
		Hence the Hermite remainder of $f$ is $0$.
		
		The special part $\cS (f)$ will be handled in the next section.
	\end{example}
	
	\section{Special and Polynomial Reductions}\label{SECT:CT}
	In this section, let $K=k(t,t')$ be a Weierstrass-like extension of $k$ and $m\in k[t,X]$ be the monic minimal polynomial of $t'$. We assume that $m=X^2-q$, where $q\in k[t]$ is squarefree and $\deg_tq\ge 3$. In such $K$, one can model extensions generated by transcendental Weierstrassian elements over~$k$.
	
	For $f\in K$, Theorem~\ref{THM:HermiteReduction}
	decomposes its normal part into the sum of a Hermite remainder $h$,
	an in-field integrable part $g'$, and an obstacle $f_0$ that admits no
	in-field integrability.
	Since such an obstacle may occur, Hermite reduction serves merely as a
	preprocessor for the normal part.
	Moreover, the special part of $f$ is not addressed.

	The goal of this section is to control these untreated parts by two further reductions, which will be called \emph{special reduction} and \emph{polynomial reduction}, respectively.
	
	At first, we reduce the special part. In general, it is difficult to determine all special points of $m$, which is equivalent to finding all algebraic solutions of a first-order differential equation. In order to circumvent this difficulty, we make a technical assumption throughout this section:
	
	\begin{hypo}\label{HYPO:constantspecial}
		Every special point of $m$ is a constant in $\overline{k}$.
	\end{hypo}
	
	By the above hypothesis, $q(\beta)=0$ if $\beta$ is a special point of $m$. Thus the special points of $m$ are constant roots of~$q$.
	
	\begin{remark}\label{rmk}
		The hypothesis holds when $k$ is a Liouvillian extension of $C_k$ and $q\in C_k[t]$ by~\cite[ Proposition 3.2 ]{Srinivasan2017}.
	\end{remark}

	\begin{lemma}
		Let $\beta\in  \overline{k}$ and $\beta'=0$. If $p\in k[t]$ is the monic minimal polynomial of $\beta$, then $p\in C_k[t]$. In particular, if $p\in k[t]$ is irreducible and special, then $p$ is a factor of $q$ with constant coefficients.
	\end{lemma}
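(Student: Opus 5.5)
The first claim says the minimal polynomial of a constant algebraic element has constant coefficients. I would prove it by differentiating the minimal polynomial relation. Write $p = t^d + a_{d-1}t^{d-1} + \cdots + a_0$ with $a_i\in k$, so that $p(\beta)=0$. The derivation of $k$ extends uniquely to $\overline{k}$, as used in \cite[Thm 3.2.4]{BronsteinBook}. Applying it to $p(\beta)=0$ and using the decomposition $q'=\kappa(q)+\partial_t(q)t'$ (now evaluated at $\beta$) gives
\[
0=\kappa(p)(\beta)+\partial_t(p)(\beta)\,\beta'=\kappa(p)(\beta),
\]
since $\beta'=0$. Now $\kappa(p)=\sum_{i<d}a_i't^i$ has degree less than $d=\deg p$, because $p$ is monic and the leading coefficient differentiates to $0$. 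It vanishes at $\beta$, and $p$ is the minimal polynomial of $\beta$, so $\kappa(p)=0$. Hence all $a_i'=0$, that is, $p\in C_k[t]$.

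For the second claim, let $p\in k[t]$ be irreducible and special. By Definition~\ref{DEF:special}, $p$ has a root $\beta$ that is a special point, and by Hypothesis~\ref{HYPO:constantspecial}, $\beta'=0$. Replacing $p$ by its monic associate, which is harmless up to a factor in $k$, $p$ is the monic minimal polynomial of $\beta$. The first part then gives $p\in C_k[t]$.

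It remains to see that $p\mid q$. Since $\beta$ is special, $m(\beta,\beta')=0$, and with $m=X^2-q$ this reads $0=(\beta')^2-q(\beta)=-q(\beta)$. (This is the remark made right after the hypothesis.) So $\beta$ is a root of $q\in k[t]$, and because $p$ is the minimal polynomial of $\beta$ over $k$, we get $p\mid q$.

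The only delicate point is the differentiation step: the derivative of $p(\beta)$ has to split into $\kappa(p)(\beta)+\partial_t(p)(\beta)\beta'$. This holds because the derivation on $\overline{k}$ extends that of $k$ and satisfies the Leibniz rule, so the argument goes through.
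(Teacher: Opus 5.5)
Your proof is correct and is essentially the paper's argument. You differentiate $p(\beta)=0$ to obtain $\kappa(p)(\beta)=0$, then use $\deg_t\kappa(p)<\deg_t p$ (from monicity) together with the minimality of $p$ to force $\kappa(p)=0$. You also write out the ``in particular'' clause, which the paper leaves to the remark following Hypothesis~\ref{HYPO:constantspecial}: the hypothesis gives $\beta'=0$, and $m(\beta,\beta')=0$ with $m=X^2-q$ gives $q(\beta)=0$, hence $p\mid q$.
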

	
	\begin{proof}
		Differentiating both sides of $p(\beta)=0$, we see that $\kappa(p)(\beta)+\beta'\partial_t(p)(\beta)=0$. Hence $\kappa(p)(\beta)=0$, i.e., $p\mid \kappa(p)$. Since $p$ is monic, we have that $\deg_t\kappa(p)< \deg_tp$. So $\kappa(p)=0$, i.e., all coefficients of $p$ are constants.
	\end{proof}
	
	\begin{lemma}\label{LEM:wpspecial}
		Let $p\in k[t]$ be irreducible and special. Then there exists exactly one place $P$ of $K$ lying above $p$. In particular, $\nu_{P}(p)=2$, $\nu_{P}(p')=1$ and $\nu_{P}(t')=1$.
	\end{lemma}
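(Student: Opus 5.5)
By the preceding lemma (which uses Hypothesis~\ref{HYPO:constantspecial}), a special irreducible $p$ lies in $C_k[t]$ and divides $q$. The plan is to first show that $p$ is totally ramified in $K=k(t)[X]/\langle X^2-q\rangle$. This gives the unique place $P$ and $\nu_P(p)=2$. After that, the orders of $t'$ and $p'$ are computed directly.

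\emph{Step 1 (unique place, ramification 2).} Since $q$ is squarefree and $p\mid q$, we have $\nu_p(q)=1$. Let $P$ be any place of $K$ lying above $p$, with ramification index $r_P$. From $(t')^2=q$ and \eqref{EQ:val} we get $2\nu_P(t')=\nu_P(q)=r_P\nu_p(q)=r_P$, so $r_P$ is even. Since $[K:k(t)]=2$, the fundamental identity $\sum_{P\mid p} r_P f_P=2$ forces $r_P=2$, and $P$ is the only place above $p$ (and $f_P=1$). Hence $\nu_P(p)=2$ and $\nu_P(t')=1$.

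\emph{Step 2 ($\nu_P(p')=1$).} Because $p\in C_k[t]$, we have $\kappa(p)=0$, so $p'=\partial_t(p)\,t'$. The polynomial $p$ is irreducible over a field of characteristic $0$, hence separable, so $\gcd(p,\partial_t p)=1$ and $\nu_P(\partial_t p)=r_P\nu_p(\partial_t p)=0$. Therefore $\nu_P(p')=0+\nu_P(t')=1$. This agrees with Theorem~\ref{THM:vspecial}, since $p$ is special and so $\nu_P(p')>0$. It also shows that $P$ is special of the zeroth kind.

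\emph{Main obstacle.} The delicate point is Step 1, namely justifying that $p$ is totally ramified in this model. The model is a possibly singular plane curve, but the argument above avoids any integral-basis computation. It only uses the valuation identity $2\nu_P(t')=r_P$ together with the bound $\sum r_Pf_P\le 2$ (the fundamental equality, e.g.\ from \cite{chevalley1951} or \cite{stichtenoth2009}). The only other input is that $p$ divides $q$ exactly once, which holds because $q$ is squarefree.
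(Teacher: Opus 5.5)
Your proposal is correct and follows essentially the same route as the paper. Both arguments use $2\nu_P(t')=\nu_P(q)=r_P$ (from $q$ squarefree) to force $r_P\ge 2$, then the fundamental identity (Chevalley, p.~52) to get a unique place with $r_P=2$, and finally $p'=\partial_t(p)\,t'$ with $\gcd(p,\partial_t p)=1$ to get $\nu_P(p')=\nu_P(t')=1$. The only difference is cosmetic: you start from an arbitrary place above $p$, whereas the paper starts from a special one.
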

	
	\begin{proof}
		Since $p$ is irreducible and special, we have that $p\mid q$ and $p\in C_k[t]$. Let $P$ be a special place lying above $p$ with ramification index $r_P$ and set $v=\nu_{P}(t')$. As $p'=t'\partial_t(p)$ and $\gcd\bigl(p,\partial_t(p)\bigr)=1$, we have $\nu_{P}(p')=v$. Since $(t')^2=q$ and $q$ is squarefree, it follows that $2v=r_P$. Then $r_P\ge 2$. By~\cite[page 52, Theorem 1]{chevalley1951}, $P$ is the only place lying above $p$ and $r_P=[K:k(t)]=2$. Hence, $v=1$.
	\end{proof}
	
	All special places of $K$ are of the zeroth kind by the above lemma. Then $C_k$ is the field of constants of $K$ by Corollary~\ref{COR:constant}.
	
	By~\cite[page 31]{Trager1984}, $\{1,t' \}$ is an integral basis of $K$. Write $q=q_Nq_S$, where $q_N\in k[t]$ is normal, $q_S\in k[t]$ is monic and all its irreducible factors are special. Then $q_S\in C_k[t]$, and
	\[
	t''=\frac{\kappa(q_N)}{2q_N}t'+\frac{\partial_t(q)}{2}.
	\]
	Hence $q_N$ is the differential denominator of $\{1,t'\}$.
	Denote by $I_K$ the set of elements in $K$ that are integral over $k[t]$. Under the basis $\{1,t'\}$, Hermite reduction simplifies the normal parts of integrands.
	
	We now describe the special reduction, which decreases the multiplicity of factors of the denominator of special parts modulo derivatives. Assume that $f\in K$ with the special part $\cS (f)=\frac{A+Bt'}{D}$, where $A,B,D\in k[t]$, $D$ is monic and $\gcd(A,B,D)=1$. Assume that $D=uv^{\mu}$ where $\mu>0$, $v$ is squarefree and coprime with $u$, and factors of $u$ have multiplicity less than $\mu$. By Hypothesis~\ref{HYPO:constantspecial}, $u,v\in C_k[t]$ and $v$ divides $q$. Set $q_v=q/v\in k[t]$. For $a,b\in k[t]$ and $\lambda\in \mathbb{N}\setminus\{0\}$, a direct calculation shows that
	\begin{equation}\label{EQ:specialred1}
		\left( \frac{a}{v^\lambda}\right)'=\frac{\kappa(a)+\partial_t(a)t'}{v^\lambda}-\frac{\lambda a\partial_t(v)t'}{v^{\lambda+1}} 
	\end{equation}
	and
	\begin{equation}\label{EQ:specialred2}
	\left(\frac{bt'}{v^\lambda}\right)'=\frac{b_1t'}{q_Nv^\lambda}+\frac{b_2}{v^{\lambda-1}}+\frac{(1-2\lambda)bq_v\partial_t(v)}{2v^\lambda}
	\end{equation}
	for some $b_1,b_2\in k[t].$ It proceeds as follows:
	
	\begin{itemize}
		\item[(i)] If $\mu\ge 1$, we compute $b\in k[t]$ such that $\deg_t b< \deg_t v$ and 
		\[
		(1-2\mu)ubq_v\partial_t(v)\equiv 2A \mod v.
		\] Such $b$ can be found since $q_v$, $\partial_t(v)$ and $u$ are coprime with~$v$. It follows from \eqref{EQ:specialred2} that
		\begin{equation}\label{EQ:specialred}
			\tilde{f}:=\cS (f)-\left(\frac{bt'}{v^\mu}\right)'=\frac{\tilde{A}}{uv^{\mu-1}}+\frac{\tilde{B}t'}{uv^\mu}+R
		\end{equation}
		for some $\tilde{A},\tilde{B}\in k[t]$ and $R\in \frac{I_K}{q_N}$.
		\item[(ii)] If $\mu$ in \eqref{EQ:specialred} is greater than or equal to $2$, then we compute $a\in k[t]$ such that $\deg_t a< \deg_t v$ and 
		\[
		(\mu-1)ua\partial_t(v)\equiv\tilde{B}\mod v.
		\] Such $a$ can be found since $u,\partial_t(v)$ are coprime with $v$. By \eqref{EQ:specialred1}, $\tilde{f}+\left( \frac{a}{v^{\mu-1}}\right)'$ has a denominator, in which the multiplicity of $v$ is at most $\mu-1$.
	\end{itemize}
	Repeating (i) and (ii) to $\cS (f)$, we have
	
	\begin{theorem}\label{THM:SpecialReduction}
		Let $f\in K$. Then there exist $g\in K$, $f_1\in \frac{I_K}{q_N}$ and $s=\frac{\theta}{\gamma}t'$, where $\gamma\in k[t]$ is squarefree with only irreducible special factors, $\theta\in k[t]$ and $\deg_t\theta<\deg_t\gamma$, such that
		\[
		\cS (f)=g'+f_1+s.
		\]
		Moreover, $s$ is unique and $s=0$ if $f\in K'$.
	\end{theorem}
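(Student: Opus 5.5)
Existence comes from iterating steps (i) and (ii) above; uniqueness and the integrability claim come from a local order argument at the special places, the same way the uniqueness part of Theorem~\ref{THM:HermiteReduction} was handled at normal places.

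\emph{Existence.} Start with $\cS(f)=\frac{A+Bt'}{D}$ and $D=uv^\mu$ as above. Step (i) uses \eqref{EQ:specialred2} to remove the $v^{-\mu}$ contribution from the rational part $A/D$. Step (ii) uses \eqref{EQ:specialred1} to remove the $v^{-\mu}$ contribution from the $t'$-part. The error terms lie in $\frac{I_K}{q_N}$, and any such pieces are collected into $f_1$. Each round strictly lowers the highest multiplicity in the special denominator, so we may induct on $(\mu,\deg v)$. The last round has $\mu=1$, and there step (i) alone applies. It kills the rational part with denominator $v$ and leaves $\frac{\tilde B t'}{uv}$ with $u$ of multiplicity $<1$, i.e.\ $u=1$. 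So the remainder is $\frac{\tilde B}{\gamma}t'$ plus an element of $\frac{I_K}{q_N}$. Here $\gamma$ is squarefree and special, since it divides $q_S$. Reduce $\tilde B$ modulo $\gamma$, moving the polynomial quotient times $t'$ (which is integral) into $f_1$. This gives $\deg\theta<\deg\gamma$.

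\emph{Uniqueness.} It suffices to show: if $F'=f_1+s$ with $f_1\in\frac{I_K}{q_N}$ and $s=\frac{\theta}{\gamma}t'$ as in the statement (possibly with an added normal-part term from Theorem~\ref{THM:HermiteReduction} that has no special poles), then $s=0$. Fix an irreducible special $p\mid\gamma$. By Lemma~\ref{LEM:wpspecial} there is a unique place $P$ above $p$, with $r_P=2$, $\nu_P(p')=1$ and $\nu_P(t')=1$; in particular $P$ is of the zeroth kind.
\begin{itemize}
\item Every term of the right-hand side except $s$ has no pole at $P$, because $q_N$ is coprime to $p$.
\item Since $p\nmid\theta$, we get $\nu_P(s)=\nu_P(t')-\nu_P(p)=1-2=-1$.
\item If $\nu_P(F)=v\neq0$, then Proposition~\ref{PROP:order}(ii) applies with equality, giving $\nu_P(F')=v+1-2=v-1$.
\end{itemize}
Now compare with $\nu_P(F')=-1$. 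The equation $v-1=-1$ forces $v=0$, which contradicts $v\neq 0$. Hence $\nu_P(F)=0$.

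The case $\nu_P(F)=0$ needs a finer look, and this is the main obstacle. Proposition~\ref{PROP:order}(i) only gives $\nu_P(F')\ge\min\{0,\nu_P(p')-r+1\}=0$, so $F'$ has no pole at $P$, contradicting $\nu_P(F')=-1$. Therefore $s$ has no pole at any special $p$, which forces $\theta=0$. (In effect, the half-integer exponent $-1$ can never be the order of a derivative at such a place.)

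\emph{Consequences.} If the decomposition were not unique, subtracting two of them would give a relation of the above form with $s$ equal to the difference of the two remainders, which by the argument must vanish. If $f\in K'$, we use Theorem~\ref{THM:HermiteReduction} for $\cN(f)$ and the present decomposition for $\cS(f)$. Every piece other than $s$ is then either a derivative or has no pole at special places. Applying the argument to $F$ with $F'=f-(\text{derivatives})$ then yields $s=0$.

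One point to double-check is that the normal-part contributions, namely $f_0$ and $h$, do not have poles at special places. Their denominators are normal, so this holds. Likewise, the pieces generated in steps (i) and (ii), the $\frac{b_1t'}{q_Nv^\lambda}$-type terms, are handled by the induction on $(\mu,\deg v)$.
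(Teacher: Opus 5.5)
Your proof is correct and follows the paper's route. Existence comes from iterating steps (i) and (ii) and then reducing $\theta$ modulo $\gamma$. Uniqueness and the integrability claim come from the order argument at the unique zeroth-kind place above a special $p\mid\gamma$, using Proposition~\ref{PROP:order} and Lemma~\ref{LEM:wpspecial}.

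The differences are cosmetic. You split on $\nu_P(F)\neq 0$ versus $\nu_P(F)=0$, where the paper splits on $<0$ versus $\ge 0$. You also compute $\nu_P(s)=-1$ exactly, which needs $p$ chosen with $p\nmid\theta$; such a $p$ exists whenever $s\neq 0$, because $\gamma$ is squarefree and $\deg_t\theta<\deg_t\gamma$.
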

	\begin{proof}
		By (i) and (ii) given above, there exists $g\in K$ such that $\cS (f)-g'=\frac{\theta_0}{\gamma}t'+r$, where $\gamma,\theta_0\in k[t]$, $r\in \frac{I_K}{q_N}$, and $\gamma$ is squarefree with merely special factors. Dividing $\theta_0$ by $\gamma$ yields the desired $\theta$ and $f_1$.
		
		Now we prove the uniqueness and in-field integrability of $s$. Similar to the proof of Theorem~\ref{THM:HermiteReduction}, it suffices to prove that $s=0$ if $f_1+s+\alpha\in K'$, where $\alpha=0$ or $\alpha=\cN (f)$.
		
		Assume $s\ne 0$ and let $F\in K$ satisfy $F'=f_1+s+\alpha$. Let $p$ be an irreducible factor of $\gamma$ and $P$ be the place lying above $p$. By Lemma~\ref{LEM:wpspecial}, $\nu_P(p)=2$ and $\nu_{P}(p')=\nu_{P}(t')=1$.
		
		We claim that $\nu_{P}(F)\ge 0$. Otherwise,
		$\nu_{P}(F')=\nu_{P}(F)-1\le -2$
		by Proposition~\ref{PROP:order}. On the other hand, $\gamma$ is squarefree, and thus, $\nu_{P}(s)\ge \nu_{P}(\frac{t'}{\gamma})=-1$. Accordingly, $\nu_{P}(F')\ge -1$ since $f_1$ and $z$ are integral at $p$,  a contradiction. The claim holds.
		
		Again by Proposition~\ref{PROP:order}, $\nu_{P}(F)\ge 0$ implies $\nu_{P}(F')\ge 0$. Since $P$ is the only place lying above $p$, $F'$ is integral at $p$, which contradicts the fact that $\{1,t'\}$ is an integral basis and $F'=f_1+\frac{\theta}{\gamma}t'+z$.
	\end{proof}
	
	For a given $f\in K$, the special reduction described above computes $f_1,g$ and $s$ in Theorem~\ref{THM:SpecialReduction}. We call $s$ the \emph{special remainder} of~$f$ (w.r.t. $\{1,t'\}$). Now we show how to use special reduction to integrate the $\cS (f)$ in Example~\ref{EX:normal}.
	\begin{example}
		
		We have $q=4(t^3+1)$. Then $q_N=4$ and $q_S=t^3+1$. Recall that $\cS (f)=\frac{A+Bt'}{D}$ where $A=2(z+1)t^2+2(2z+1)t-4z$, $B=1$ and $D=t+1$.
		Then $u=1$, $v=t+1$, $\mu=1$ and $q_v = 4(t^2-t+1)$.
		One can find that $b=z$ such that $\deg_t b < \deg_t v$ and $(1-2\mu)\,u b\, q_v\, \partial_t(v) \equiv 2A \mod{v}$.
		Then $\tilde{A}=0$, $\tilde{B}=0$ and $R=2t$ in~\eqref{EQ:specialred}.
		Therefore, $\cS (f) =\left( \frac{zt'}{t+1} \right)'+ 2t  $. The special reminder of $f$ is $0$.
		
		Combine with the result in Example~\ref{EX:normal}, we see that the obstacles in $\cN (f)$ and $\cS (f)$ are canceled with each other, hence $f\in K'$, i.e.,
		\[\int f \, dz = \frac{1+t'}{t} + \frac{zt'}{t+1}.\]
		
	\end{example}
	
	By Theorems~\ref{THM:HermiteReduction} and \ref{THM:SpecialReduction},
	each element of $K$ is decomposed as the sum of its Hermite remainder, special remainder and an element in $\frac{I_K}{q_N}$. To control the poles at infinity, we develop the polynomial reduction to simplify the elements of $\frac{I_K}{q_N}$.
	
	For $f\in \frac{I_K}{q_N}$, we can write  $f=\frac{a+bt'}{q_N}+w+rt'$, where $a, b, w, r \in k[t]$ with $\deg_t a$ and  $\deg_t b$ are less than $\deg_t q_N$. Then 
	\[
	\cN (f)=\frac{a+bt'}{q_N}\,\, \text{and} \,\, \cS (f)= w+rt'.
	\] Moreover, $\cS (f)$ is integral over $k[t]$. Write $r=r_nt^n+\dots+r_0$, and set $\bUpsilon(f):=\frac{r_n}{n+1}t^{n+1}+\dots+r_0t\in k[t]$. Then $\partial_t\bigl(\bUpsilon(f)\bigr)=r$. By a direct calculation,
	\[
	f-\bUpsilon(f)'=f-\kappa\bigl(\bUpsilon(f)\bigr)-rt'=\cN (f)+w-\kappa\bigl(\bUpsilon(f)\bigr).
	\]
	So the special part of $f-\bUpsilon(f)'$ is $w-\kappa\bigl(\bUpsilon(f)\bigr)$, denoted by $\cS^*(f)$, which belongs to $k[t]$. Moreover, $\cS^*$ is a $k$-linear operator on $\frac{I_K}{q_N}$.
	\begin{lemma}\label{LEM:polyred}
		For any $\delta\in k$ and $\lambda\in\mathbb{N}$, we have $(\delta t^\lambda t' )'\in \frac{I_K}{q_N}$ and $\cS^*\bigl((\delta t^\lambda t' )'\bigr)$ is of degree $\lambda-1+\deg_t q$ with leading coefficient $l(\lambda)\delta$, where 
\[l(\lambda)=\left(\lambda+\frac{\deg_tq}{2}\right)\lc_t(q).\]
	\end{lemma}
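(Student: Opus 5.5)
Compute $(\delta t^\lambda t')'$ directly. Using $(t')^2=q$ and the expression for $t''$ given above, we get
\[
(\delta t^\lambda t')' = \delta' t^\lambda t' + \lambda\delta t^{\lambda-1}(t')^2 + \delta t^\lambda t''
= \delta' t^\lambda t' + \lambda\delta t^{\lambda-1} q + \delta t^\lambda\Bigl(\frac{\kappa(q_N)}{2q_N}t'+\frac{\partial_t(q)}{2}\Bigr).
\]
Every term is in $I_K$ except $\delta t^\lambda \frac{\kappa(q_N)}{2q_N}t'$, which lies in $\frac{I_K}{q_N}$. Hence $(\delta t^\lambda t')'\in \frac{I_K}{q_N}$.

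Next I identify the decomposition $f=\frac{a+bt'}{q_N}+w+rt'$ used to define $\cS^*$. The rational part, free of $t'$, is the polynomial
\[
w=\lambda\delta t^{\lambda-1}q+\tfrac12\delta t^\lambda\partial_t(q).
\]
The $t'$-coefficient is $\delta' t^\lambda+\delta t^\lambda\kappa(q_N)/(2q_N)$. Divide $t^\lambda\kappa(q_N)$ by $q_N$ with remainder, say $t^\lambda\kappa(q_N)=c\,q_N+\rho$ with $\deg_t\rho<\deg_t q_N$. This gives
\[
b=\tfrac12\delta\rho, \qquad r=\delta' t^\lambda+\tfrac12\delta c, \qquad a=0.
\]
Since $\deg_t\kappa(q_N)<\deg_t q_N$ (taking $q_N$ monic, or at least its leading coefficient constant up to normalization; see below), we have $\deg_t c<\lambda$. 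So $\deg_t r\le\lambda$, and hence $\deg_t\bUpsilon(f)\le\lambda+1$ and $\deg_t\kappa(\bUpsilon(f))\le\lambda+1$.

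Finally, $\cS^*=w-\kappa(\bUpsilon(f))$. The polynomial $w$ has degree $\lambda-1+\deg_t q$ with leading coefficient
\[
\lambda\delta\lc_t(q)+\tfrac12\delta\deg_t(q)\lc_t(q)=l(\lambda)\delta .
\]
Because $\deg_t q\ge 3$, we get $\lambda-1+\deg_t q\ge\lambda+2>\lambda+1$. So the $\kappa(\bUpsilon)$ correction cannot affect the top term, and the claim follows. Note that $l(\lambda)\neq0$ since $\lambda\ge0$ and $\deg_t q>0$, so the degree is exact.

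The main obstacle is the normalization of $q_N$ versus $q$. The paper writes $q=q_Nq_S$ with $q_S$ monic in $C_k[t]$ but $q_N\in k[t]$ possibly with nonconstant leading coefficient. Then $\kappa(q_N)$ may have the same degree as $q_N$, so $\deg_t c$ can reach $\lambda$ and $\deg_t\kappa(\bUpsilon)$ can reach $\lambda+1$. This is still harmless, because of the strict gap $\lambda+1<\lambda-1+\deg_t q$ coming from $\deg_t q\ge3$. I will state the bound as $\deg_t r\le\lambda$ in general, so no normalization is actually needed. The only remaining care is to check that the $\frac{I_K}{q_N}$ decomposition has the form required in the definition of $\cS^*$, namely $a,b$ of degree less than $\deg_t q_N$.
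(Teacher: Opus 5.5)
Your proof is correct and follows the paper's argument: write $(\delta t^\lambda t')'=\Gamma_1+\Gamma_2t'$, read off the leading term $l(\lambda)\delta\, t^{\lambda-1+\deg_t q}$ of the polynomial $\Gamma_1=w$, and use $\deg_t q\ge 3$ to see that $\kappa(\bUpsilon(f))$, which has degree at most $\lambda+1$, cannot reach that term. Your division $t^\lambda\kappa(q_N)=c\,q_N+\rho$ gives only the upper bound $\deg_t r\le\lambda$, which is more accurate than the paper's remark that $\cS(\Gamma_2)$ is ``either $0$ or of degree $\lambda$'', and the upper bound is all the argument needs.
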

	\begin{proof}
		A direct calculation shows that
		\[
		\bigl(\delta t^\lambda t'\bigl)'=\underbrace{\lambda\delta qt^{\lambda-1}+\frac{\delta \partial_t(q)t^{\lambda}}{2}}_{\Gamma_1}+\underbrace{\left(\delta't^\lambda+\delta\frac{\kappa(q_N)}{2q_N}t^\lambda\right)}_{\Gamma_2}t'.
		\]
		Hence $(\delta t^\lambda t' )'\in \frac{I_K}{q_N}$. Furthermore, the polynomial $\Gamma_1\in k[t]$ is of degree $\lambda-1+\deg_tq$ with leading coefficient $l(\lambda)\delta$, $\Gamma_2\in k(t)$ and $\cS (\Gamma_2)$ is either $0$ or of degree $\lambda$. Set $R:=\bUpsilon\bigl((\delta t^\lambda t')'\bigr)$. Then $\partial_t(R)=\cS (\Gamma_2)$ and $\cS^*\bigl((\delta t^\lambda t' )'\bigr)=\Gamma_1-\kappa(R)$. In particular, $\deg_t R\le \lambda+1$.
		From the assumption that $\deg_t q\ge 3$, we have
		\[
		\deg_t\kappa(R)\le \lambda+1 < \lambda-1+\deg_t q=\deg_t \Gamma_1.
		\]
		Thus  $\cS^*\bigl((\delta t^\lambda t' )'\bigr)$ has the same leading term as $\Gamma_1$.
	\end{proof}
	
	Given $f\in \frac{I_K}{q_N}$, we let $d=\deg_t \cS^*(f)$ and $\varepsilon=\lc_t\bigl(\cS^*(f) \bigr)$. If $d\ge \deg_t q-1$, then we take $\lambda$ and $\delta$ in the above lemma as $d-\deg_t q+1$ and $\varepsilon/l(\lambda)$, respectively. The same lemma implies the leading term of $\cS^*\bigl((\delta t^\lambda t' )'\bigr)$ is equal $\cS^*(f)$.
	
	Let $\tilde{f}=f-(\delta t^\lambda t')'$. Then the degree of $\cS^*\bigl(\tilde{f}\bigr)$ is less than $d$. Repeating this process to $\tilde{f}$ until $\deg_t \cS^*(\tilde{f})<\deg_t q-1$, we find $b\in k[t]$ such that $\cS^*(f-b')$ has degree less than $\deg_t q-1$. With this degree-decreasing process, we have
	\begin{theorem}\label{THM:PolyReduction}
		Let $f\in \frac{I_K}{q_N}$. Then there exist $g\in K$, $f_2\in \frac{I_K}{q_N}$ with $\cS (f_2)=0$ and $\eta\in k[t]$ with $\deg_t \eta <\deg_t q - 1$, such that
		\[
		f=g'+f_2+\eta.
		\]
		Moreover, $f_2$ is unique, and $f\in K'$ if and only if $f_2=0$ and $\eta\in k'$.
	\end{theorem}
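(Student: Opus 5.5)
The plan is to prove existence by iterating the degree-decreasing step described just before the statement, and then to obtain uniqueness and the integrability criterion from a pole analysis followed by a degree comparison.

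\emph{Existence.} Start from $f\in\frac{I_K}{q_N}$ and subtract $\bUpsilon(f)'$. By the computation preceding Lemma~\ref{LEM:polyred}, the result is $\cN(f)+\cS^*(f)$, with $\cS^*(f)\in k[t]$. If $d=\deg_t\cS^*(f)\ge\deg_t q-1$, subtract $(\delta t^\lambda t')'$ with $\lambda=d-\deg_tq+1$ and $\delta=\varepsilon/l(\lambda)$. Here $l(\lambda)\neq0$, since $\lambda\ge0$ and $\lc_t(q)\neq 0$. This subtraction stays inside $\frac{I_K}{q_N}$ by Lemma~\ref{LEM:polyred}. The new $t'$-polynomial part is again absorbed by the next $\bUpsilon$, and the degree of $\cS^*$ strictly drops. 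After finitely many steps we obtain $g\in K$, a sum of terms $\delta t^\lambda t'$ and polynomials $\bUpsilon(\cdot)$, such that
\[
f-g'=\frac{a+bt'}{q_N}+\eta,
\]
with $\deg a,\deg b<\deg q_N$ and $\deg\eta<\deg q-1$. Set $f_2$ to be the first summand; then $\cS(f_2)=0$.

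\emph{Key claim.} Suppose $h=\frac{a+bt'}{q_N}$ (with the same degree bounds) and $\rho\in k[t]$ with $\deg\rho<\deg q-1$ satisfy $h+\rho=G'$ for some $G\in K$. We show $h=0$ and $\rho\in k'$. This claim gives both uniqueness (apply it to the difference of two decompositions) and the criterion "$f\in K'$ iff $f_2=0$ and $\eta\in k'$". The converse direction of the criterion is trivial.

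\emph{Pole analysis.} First show that $G$ is integral over $k[t]$.
\begin{itemize}
\item At a place $P$ above a normal $p$ with ramification index $r_P$: if $\nu_P(G)<0$, Proposition~\ref{PROP:order} gives $\nu_P(G')<-r_P$, whereas $G'$ has at most a simple $p$ in its denominator. This is the argument of Theorem~\ref{THM:HermiteReduction}.
\item At the unique place above a special $p$ (Lemma~\ref{LEM:wpspecial}): $G'$ is integral there, and $\nu_P(G)<0$ would force $\nu_P(G')=\nu_P(G)-1<0$.
\end{itemize}
Hence $G=c+dt'$ with $c,d\in k[t]$. Using $t''=\frac{\kappa(q_N)}{2q_N}t'+\frac{\partial_t q}{2}$, we get
\[
G'=\Bigl(\kappa(c)+\tfrac{d\,\partial_t q}{2}\Bigr)+\Bigl(\partial_t c+\kappa(d)+\tfrac{d\,\kappa(q_N)}{2q_N}\Bigr)t'.
\]

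\emph{Degree comparison.} Compare the polynomial parts of the two coefficients with those of $h+\rho$.
\begin{itemize}
\item In the $t'$-coefficient, the polynomial part must vanish. So $\partial_t c$ equals minus the polynomial part of $\kappa(d)+d\kappa(q_N)/(2q_N)$, which forces $\deg c\le\deg d+1$.
\item In the $1$-coefficient, the polynomial part must equal $\rho$. If $d\ne0$ with $\deg d=\lambda$, then $\frac{d\,\partial_tq}{2}$ has degree $\lambda+\deg q-1$. This exceeds $\deg\kappa(c)\le\lambda+1$ because $\deg q\ge3$, exactly as in Lemma~\ref{LEM:polyred}. The result contradicts $\deg\rho<\deg q-1$.
\end{itemize}
Thus $d=0$, so $\partial_tc=0$, i.e.\ $c\in k$, and $G'=\kappa(c)\in k'$. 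Comparing canonical representations then gives $h=0$ and $\rho=c'\in k'$.

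The main obstacle is the careful bookkeeping in this last step. The normal fraction $d\kappa(q_N)/(2q_N)$ must be split into its polynomial part and its proper-fraction part, so that only the polynomial parts are compared with $\rho$ and with the zero polynomial part of $h$.
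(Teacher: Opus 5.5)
Your proof is correct. The existence part and the pole analysis showing that $G$ is integral over $k[t]$ are the same as in the paper.

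There is one computational slip. Differentiating $d\,t'$ produces $\partial_t(d)\,(t')^2=\partial_t(d)\,q$, which you dropped. The $1$-coefficient of $G'$ is therefore $\kappa(c)+\partial_t(d)\,q+\frac12 d\,\partial_t q$, which is the paper's $A_0$. The degree argument survives, but you should state why. If $\deg_t d=\lambda$, both $\partial_t(d)\,q$ and $\frac12 d\,\partial_t q$ have degree $\lambda+\deg_t q-1$. Their leading coefficients add up to $(\lambda+\frac{\deg_t q}{2})\lc_t(q)\lc_t(d)=l(\lambda)\lc_t(d)\neq 0$, so they cannot cancel.

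The real difference from the paper is how you obtain the two polynomial identities $B_0=0$ and $A_0=\rho$ (in the paper's notation).
\begin{itemize}
\item The paper routes uniqueness through Lemma~\ref{LEM:inforder}. That lemma assumes only the order bound $\nu_P(F'-\eta)\ge\nu_P(t')-\nu_P(t)$ at infinite places. It then uses integrality of $\frac{t}{t'}(A_0+\eta+B_0t')$ at $t^{-1}$ together with normality of $\{1,t'\}$ at infinity (an external lemma) to get $B_0=0$. A second order argument at $t^{-1}$ rules out $B\neq 0$.
\item You instead use the exact shape of $h+\rho$. You read both identities off by comparing polynomial and proper-fraction parts of each coefficient in the basis $\{1,t'\}$, and the final degree comparison is then identical.
\end{itemize}
For this theorem your route is shorter and avoids infinite places entirely. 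The paper's formulation buys reuse. In Corollary~\ref{COR:elementaryint}, $F'-\eta$ also contains logarithmic derivatives $c_iu_i'/u_i$ of unknown shape, for which only an order bound at infinity is available. An exact coefficient comparison like yours would not apply there.
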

	The proof of Theorem~\ref{THM:PolyReduction} is based on the next lemma.
	\begin{lemma}\label{LEM:inforder}
		\begin{itemize}
			\item[(i)] Let $P$ be an infinite place of $K$ and let $f=a+bt'$ be such that $a,b\in k(t)$ are proper fractions. Then $\nu_{P}(f)\ge \nu_{P}(t')-\nu_{P}(t)$.
			\item[(ii)] Let $F\in K$ be integral over $k[t]$ and polynomial $\eta\in k[t]$ with $\deg_t \eta<\deg_t q-1$. If $\nu_{P}(F'+\eta)\ge \nu_{P}(t')-\nu_{P}(t)$
			for any infinite place $P$ of $K$, then $F\in k$.
		\end{itemize}
		
	\end{lemma}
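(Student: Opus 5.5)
(i) is a direct valuation count. Write $\deg_t q = d \ge 3$ and let $P$ lie above $t^{-1}$ with ramification index $r$, so that $\nu_P(t)=-r$. From $(t')^2=q$ we get $2\nu_P(t') = -rd$, hence $\nu_P(t')=-rd/2$. A proper fraction $a\in k(t)$ has $\nu_\infty(a)\ge 1$, so $\nu_P(a)\ge r$ by \eqref{EQ:val}. Similarly $\nu_P(bt')\ge r+\nu_P(t')$. Since $\nu_P(t')<0$, both quantities are at least $r+\nu_P(t') = \nu_P(t')-\nu_P(t)$. Property (iii) of order functions then gives the bound for $f=a+bt'$.

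For (ii), write $F=A+Bt'$ with $A,B\in k[t]$; this is possible because $\{1,t'\}$ is an integral basis. The plan is a degree comparison at infinity.

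First I would show $B=0$. Suppose $B\ne0$ with $\deg_t B=\beta$. The term $Bt'$ has order $-r\beta+\nu_P(t')$. Its derivative is
\[(Bt')' = \kappa(B)t' + \partial_t(B)q + B\,t'',\]
and $t''=\tfrac{\kappa(q_N)}{2q_N}t'+\tfrac{\partial_t(q)}{2}$. The dominant piece is the polynomial $\partial_t(B)q+\tfrac12 B\partial_t(q)$. As in Lemma~\ref{LEM:polyred}, its leading coefficient is $(\beta+\tfrac d2)\lc(B)\lc(q)\ne0$, so it has $t$-degree $\beta+d-1$.

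The remaining contributions sit in the $t'$-component or are of lower degree:
\begin{itemize}
\item $A'=\kappa(A)+\partial_t(A)t'$;
\item $\kappa(B)t'$ and $B\tfrac{\kappa(q_N)}{2q_N}t'$.
\end{itemize}
So I would separate $F'+\eta$ into its rational part and its $t'$-part. The $t'$-part is $\bigl(\partial_t(A)+\kappa(B)+B\kappa(q_N)/(2q_N)\bigr)t'$. The rational part is $\kappa(A)+\partial_t(B)q+\tfrac12B\partial_t(q)+\eta$.

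Because $\nu_P(t')=-rd/2$ while rational functions have orders in $r\bZ$, the two parts cannot cancel when $d$ is odd. I would therefore argue in terms of the components: by (i), the bound forces the rational part to be a proper fraction and the coefficient of $t'$ to be a proper fraction. I would justify this by noting that $x+yt'$ with $x,y\in k(t)$ has $\nu_P = \min(\nu_P(x),\nu_P(yt'))$ whenever the orders differ. For odd $d$ they always differ. For even $d$, pick a place $P$ and use the conjugate place, or simply use that $\{1,t'\}$ is a basis together with the norm $x^2-y^2q$. \textbf{This separation is the main obstacle}, and I would handle the even case via both places above infinity, or via the conjugate automorphism $t'\mapsto -t'$.

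Once the separation holds, the conclusion follows from degree counting.
\begin{itemize}
\item \emph{Rational part.} It must be proper, and it is a polynomial, so it must vanish. Its degree-$(\beta+d-1)$ term comes only from $B$, since $\deg\kappa(A)\le\deg A$ and $\deg\eta<d-1$. This is matched by $\kappa(A)$ only if $\deg A\ge\beta+d-1$.
\item \emph{$t'$-coefficient.} It must be proper, which forces $\partial_t(A)+\kappa(B)$ to have degree at most the degree of the polynomial part of $B\kappa(q_N)/(2q_N)$. That is at most $\beta$, so $\deg A\le \beta+1$.
\item Since $d\ge3$, we have $\beta+d-1>\beta+1$, a contradiction. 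Hence $B=0$.
\end{itemize}

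With $B=0$ we have $F=A$. The $t'$-coefficient $\partial_t(A)$ is a proper polynomial, hence zero, so $A\in k$ and $F\in k$.
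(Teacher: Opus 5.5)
Your skeleton matches the paper's proof. You separate $F'+\eta$ at infinity into its $k[t]$-component $x=\kappa(A)+\partial_t(B)q+\tfrac12B\partial_t(q)+\eta$ and its $t'$-component $yt'$. You use properness of $y$ to get $\deg_t A\le\deg_t B+1$. You then derive a contradiction from the term $(\beta+\tfrac d2)\lc_t(B)\lc_t(q)\,t^{\beta+d-1}$ of $x$.

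The only difference is how you separate. The paper multiplies by $t/t'$ and cites normality of $\{1,t'\}$ at $t^{-1}$ together with \cite[Lemma~2]{ChenKauersKoutschan2016}. You use parity for odd $d$ and the conjugation $\sigma\colon t'\mapsto -t'$ otherwise. The conjugation argument is correct and in fact covers every $d$. Indeed, $\nu_P\circ\sigma$ is the order function of the infinite place $P'=\sigma^{-1}(P)$, and $\nu_{P'}(t')-\nu_{P'}(t)=\nu_P(t')-\nu_P(t)$. So $\nu_P(x-yt')=\nu_{P'}(x+yt')$ obeys the same bound, and hence so do $2x$ and $2yt'$. You should write this out rather than leave it as the ``main obstacle''. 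It is a self-contained substitute for the citation, though it relies on $m=X^2-q$ being quadratic; the paper's normality argument is the one that extends beyond that case.

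The actual error is what you extract from the separated bound for the first component. You assert that it forces $x$ to be a proper fraction, hence zero since $x$ is a polynomial. Part (i) only gives the implication from properness to the bound, and the converse fails for $x$. The condition $\nu_P(x)\ge\nu_P(t')-\nu_P(t)=r(1-\tfrac d2)$ only says $\deg_t x\le\tfrac d2-1$. Properness follows only for the $t'$-component, where you can cancel $\nu_P(t')$ and obtain $\nu_P(y)\ge r$. For example, $F=0$ and $\eta=1$ satisfy the hypothesis with $x=1\ne0$; this slack is exactly what Corollary~\ref{COR:elementaryint} measures.

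The repair is one line, because you only need the coefficient of $t^{\beta+d-1}$ in $x$ to vanish. Since $\beta+d-1\ge d-1>\tfrac d2-1$, the correct bound still forces $\kappa(A)$ to cancel the term coming from $B$. Hence $\deg_t A\ge\beta+d-1>\beta+1$, which is the desired contradiction. This is precisely the paper's final step, where $\deg_t\bigl((A_0+\eta)t\bigr)=\deg_t B+\deg_t q$ makes $\frac{(A_0+\eta)t}{q}t'$ non-integral at $t^{-1}$. Your concluding step ($B=0$, so $\partial_t(A)$ is a proper polynomial, hence $A\in k$) is fine, since it uses only the $t'$-component.
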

	\begin{proof}
		(i) Let $r_P$ be the ramification index of $P$. For any polynomial $w\in k[t]$, we have $\nu_{P}(w)=-r_P\deg_t w$ since $\nu_{P}(t)=-r_P<0$. Then $\nu_{P}(t')<0$ by $(t')^2=q$. As $a$ and $b$ are proper,
		\[
		\nu_{P}(a)\ge r_P>r_P+\nu_{P}(t') \,\,\text{and}\,\,\nu_{P}(bt')\ge r_P+\nu_{P}(t').
		\]
		
		(ii) Since $\{1,t' \}$ is an integral basis, write $F=A+Bt'$ with $A,B\in k[t]$. Then $F'=A_0+B_0t'+\cN (F')$, where
		\[
		A_0=\kappa(A)+\partial_t(B)q+\frac{1}{2}B\partial_t(q),\,\, B_0=\partial_t(A)+\kappa(B)+S\left(\frac{\kappa(q_N)B}{2q_N} \right),
		\]
		and $A_0$, $B_0 \in k[t]$. By (i), $\nu_{P}\bigl(\cN (F')\bigr)\ge \nu_{P}(t')-\nu_{P}(t)$. Since
		\[
		A_0+\eta+B_0t'=F'+\eta-\cN (F'),
		\] we have that $\nu_{P}(A_0+\eta+B_0t')$ is no less than $\nu_{P}(t')-\nu_{P}(t)$ for any infinite place $P$. Set
		\[
		C:=\frac{t}{t'}(A_0+\eta+B_0t')=\frac{(A_0+\eta)t}{q}t'+B_0t.
		\]
		Then $C$ is integral at $t^{-1}$. By \cite[page~30, Proposition]{Trager1984}, $\{1,t'\}$ is normal at $t^{-1}$. It follows from \cite[Lemma~2]{ChenKauersKoutschan2016} that
		$\frac{(A_0+\eta)t}{q}t'$ and $B_0t$ are integral at $t^{-1}$. Accordingly, $B_0=0$.
		
		We claim that $B=0$. Otherwise, $B_0=0$ and $B\ne 0$ imply that $\deg_t \partial_t(A)\le \deg_t B$, i.e., $\deg_t A\le \deg_t B+1$. Since $\deg_t q\ge 3$, we have
		$\deg_t \kappa(A)\le \deg_t A\le \deg_t B+1 < \deg_t B+\deg_t q-1$.
		Note that
		%$\partial_t(B)q$ and $\frac{1}{2}B\partial_t(q)$ are of degree $\deg_t B+\deg_t q-1$ with leading coefficients differing by a positive rational number,
		the degree of $\partial_t(B)q+\frac{1}{2}B\partial_t(q)$ is equal to $\deg_t B+\deg_t q-1$, which is greater than $\deg_t\kappa(A)$. Hence 
		\[
		\deg_t A_0=\deg_t B+\deg_t q-1>\deg_t \eta.
		\]
		It follows that
		\[
		\deg_t (A_0+\eta)t=\deg_t A_0t=\deg_t B+\deg_t q \ge \deg_t q,
		\]
		hence $\frac{(A_0+\eta)t}{q}$ admits nonpositive order at $t^{-1}$. As $t'$ is not integral at $t^{-1}$, neither is $\frac{(A_0+\eta)t}{q}t'$, a contradiction. The claim holds.
		
		Consequently, $A\in k$ by $B_0=0$. It follows that $F\in k$ .
	\end{proof}
	\begin{proof}[Proof of Theorem~\ref{THM:PolyReduction}]
		Let $b\in k[t]$ be such that the degree of $\cS^*(f-b')$ is less than $\deg_t q-1$. By the definition of $\cS^*$, we have
		\[
		f-b'-\bigl(\bUpsilon(f-b')\bigr)'=\cN (f-b')+\cS^*(f-b').
		\]
		Setting $g=b+\bUpsilon(f-b')$, $f_2=\cN (f-b')$ and $\eta=\cS^*(f-b')$ gives the desired decomposition.
		
		For the uniqueness of $f_2$ and in-field integrablity condition for~$f$, it suffices to prove that $f_2+\eta\in K'$ implies $f_2=0$ and $\eta\in k'$. Assume $F'=f_2+\eta$ for some $F\in K$. By an order comparison similar to those in the proofs of Theorems~\ref{THM:HermiteReduction} and~\ref{THM:SpecialReduction}, $F$ is integral over $k[t]$. Let $P$ be an infinite place of $K$ with ramification index $r_P$. By Lemma~\ref{LEM:inforder}~(i), $\nu_{P}(f_2)\ge \nu_{P}(t')-\nu_{P}(t)$. Hence $\nu_{P}(F'-\eta)\ge \nu_{P}(t')-\nu_{P}(t)$. By Lemma~\ref{LEM:inforder} (ii), $F\in k$. Thus $f_2=0$ and $\eta=F'\in k'$.
	\end{proof}
	 The process of polynomial reduction naturally translates into an algorithm for computing $f_2,g$ and $\eta$ in Theorem~\ref{THM:PolyReduction}. Combining the Hermite, special and polynomial reductions, we have the following theorem.
	\begin{thm}\label{THM:WholeReduction}
       For $f\in K$, we let $h$ be the Hermite remainder and $s$ be the special remainder of $f$ w.r.t. $\{1,t'\}$ as in Theorems~\ref{THM:HermiteReduction} and \ref{THM:SpecialReduction}, respectively. Then there exists $g\in K$, a unique element $l\in \frac{I_K}{q_N}$ with no special part, and $\eta\in k[t]$ with $\deg_t \eta < \deg_t q -1$ such that

		\[
		f=g'+h+s+l+\eta.
		\]
		Moreover, $f\in K'$ if and only if $h$, $s$ and $l$ are all zero and $\eta\in k'$.
	\end{thm}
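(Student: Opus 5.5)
We obtain the decomposition by applying the three reductions in turn. First write $f=\cN(f)+\cS(f)$. Theorem~\ref{THM:HermiteReduction} with the integral basis $\{1,t'\}$ (differential denominator $q_N$) gives $\cN(f)=g_1'+f_0+h$, where $f_0$ has a normal denominator dividing $q_N$. Hence $f_0\in \frac{I_K}{q_N}$. Theorem~\ref{THM:SpecialReduction} gives $\cS(f)=g_2'+f_1+s$ with $f_1\in\frac{I_K}{q_N}$. Then $f_0+f_1\in\frac{I_K}{q_N}$, and Theorem~\ref{THM:PolyReduction} yields $f_0+f_1=g_3'+l+\eta$ with $l\in\frac{I_K}{q_N}$, $\cS(l)=0$ and $\deg_t\eta<\deg_t q-1$. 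Setting $g=g_1+g_2+g_3$ gives $f=g'+h+s+l+\eta$. The remainders $h$ and $s$ are already unique by the two earlier theorems. Uniqueness of $l$ will follow from the integrability criterion, as explained at the end.

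For the ``if'' direction: if $h=s=l=0$ and $\eta=c'$ with $c\in k$, then $f=(g+c)'$.

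For the ``only if'' direction, assume $f\in K'$. Then $h+s+l+\eta\in K'$, say it equals $F'$. The plan is an order comparison at each type of place, as in the uniqueness proofs of the earlier theorems.

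\emph{Step 1: $h=0$.} At a normal irreducible $p$ coprime with $q_N$, every summand of $F'$ has at most a simple pole, since the denominators of $h$, $l$ and $s$ are squarefree. Proposition~\ref{PROP:order}(ii) then forces $\nu_P(F)\ge0$ at all places above normal $p$, so $\cN(F)=0$. Write $F=\vec\rho^{\tau}\vec\omega/H$ with $H$ having only special factors. As in the proof of Theorem~\ref{THM:HermiteReduction}, $F'$ has no normal pole coprime with $e=q_N$, which gives $h=0$.

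\emph{Step 2: $s=0$.} At a special $p$, Lemma~\ref{LEM:wpspecial} gives a unique place $P$ with $\nu_P(p)=2$ and $\nu_P(p')=1$. The terms $h$, $l$ and $\eta$ are integral at $p$, while $\nu_P(s)\ge-1$. If $\nu_P(F)<0$, Proposition~\ref{PROP:order}(ii) with equality would give $\nu_P(F')\le-2$, a contradiction. So $\nu_P(F)\ge0$, hence $\nu_P(F')\ge0$, and $F'$ is integral at $p$. This forces $s=0$, exactly as in Theorem~\ref{THM:SpecialReduction}. This step could alternatively be obtained by invoking Theorem~\ref{THM:SpecialReduction} with $\alpha=\cN(f)$, but the direct argument is cleaner.

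\emph{Step 3: $l=0$ and $\eta\in k'$.} Now $l+\eta\in K'$ with $l\in\frac{I_K}{q_N}$. Since $\cS(l+\eta)=\eta$ and $\bUpsilon(l+\eta)=0$, applying the decomposition of Theorem~\ref{THM:PolyReduction} to $l+\eta$ (with $g=0$) and using its criterion gives $l=0$ and $\eta\in k'$.

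Uniqueness of $l$ follows by linearity. Two decompositions with the same unique $h$ and $s$ differ by $0=(g-\tilde g)'+(l-\tilde l)+(\eta-\tilde\eta)$. Here $l-\tilde l\in\frac{I_K}{q_N}$ has no special part, and $\deg_t(\eta-\tilde\eta)<\deg_t q-1$. The criterion of Theorem~\ref{THM:PolyReduction} then gives $l=\tilde l$.

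I expect the main obstacle to be Step 1. There the pole structure of $F$ must be controlled in the presence of the extra summands $s$, $l$ and $\eta$, and of poles at factors of $q_N$ where $h$ and $l$ may interact. The key point to verify is that all these summands have only simple normal poles, so the argument of Theorem~\ref{THM:HermiteReduction} carries over unchanged.
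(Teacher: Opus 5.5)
Your proposal is correct and follows the paper's route. Existence comes from chaining the Hermite, special and polynomial reductions; for $f\in K'$ you get $h=s=0$ and then apply the criterion of Theorem~\ref{THM:PolyReduction} to $l+\eta$ (and, for uniqueness of $l$, to the difference of two decompositions), exactly as the paper does. The only difference is that your Steps~1 and~2 redo the order comparisons, whereas the paper simply uses the clauses ``$h=0$ if $f\in K'$'' and ``$s=0$ if $f\in K'$'' already contained in Theorems~\ref{THM:HermiteReduction} and~\ref{THM:SpecialReduction}.
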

	\begin{proof}
		The existence of $g,l$ and $\eta$ follows from Theorems~\ref{THM:HermiteReduction}, \ref{THM:SpecialReduction} and~\ref{THM:PolyReduction}. If $f\in K'$ or $f=0$, then $l+\eta\in K'$. Hence $l=0$ and $\eta\in k'$ by Theorem~\ref{THM:PolyReduction}.
	\end{proof}
	Although $\eta$ in Theorem~\ref{THM:WholeReduction} is not unique, it is determined up to an element in $k'$ additively. Hence the positive degree terms of $\eta$ are unique. We call such $\eta$ a \emph{polynomial remainder} of~$f$ (w.r.t. $\{1,t' \}$).
	
	%\begin{remark}
	%An easy refinement to the term of degree $0$ of $\eta$ can make the reduction complete if a complete reduction in $k$ is available.
	%\end{remark}
	
	Theorem~\ref{THM:WholeReduction} is not only a criterion for in-field integrablity in Weierstrass-like extensions, but also leads to a necessary condition for elementary integrablity.
	
	\begin{corollary}\label{COR:elementaryint}
		Assume that $C_k$ is algebraically closed. Let $f\in K$ and $\eta$ be a polynomial remainder of $f$. If $f$ has an elementary integral over $K$, then $\deg_t \eta\le \frac{\deg_t q}{2}-1$.
	\end{corollary}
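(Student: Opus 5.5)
Assume $f$ has an elementary integral over $K$. The plan is to apply Liouville's theorem (with $C_k$ algebraically closed, so that the constants of the elementary extension may be taken in $C_k$; the constants of $K$ are already $C_k$ by Corollary~\ref{COR:constant}). This gives
\[
f = v_0' + \sum_{i=1}^{N} c_i \frac{u_i'}{u_i}, \qquad v_0,u_i\in K,\ c_i\in C_k .
\]
Next we will compare this with the decomposition $f=g'+h+s+l+\eta$ of Theorem~\ref{THM:WholeReduction}. Subtracting gives
\[
\eta = (v_0-g)' + \sum c_i \frac{u_i'}{u_i} - h - s - l .
\]
Only the behaviour at the infinite places matters. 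Write $d=\deg_t q$, let $P$ be an infinite place with ramification index $r_P$, and note $\nu_P(t)=-r_P$ and $\nu_P(t')=-r_P d/2$.

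The first step is to bound each term other than $\eta$ and $(v_0-g)'$ at $P$.
\begin{itemize}
\item \emph{Logarithmic derivatives.} Proposition~\ref{PROP:orderinf} gives $\nu_P(u'/u)\ge \nu_P(t')+r_P$ when $\nu_P(u)\neq 0$. When $\nu_P(u)=0$, part~(i) gives $\nu_P(u')\ge \min\{0,\nu_P(t')+r_P+1\}$. In both cases $\nu_P(u_i'/u_i)\ge \nu_P(t')+r_P = \nu_P(t')-\nu_P(t)$.
\item \emph{The remainders $h,s,l$.} Each has the form $a+bt'$ with $a,b$ proper fractions in $k(t)$; for $l$, write $l=\cN(l)$. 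Lemma~\ref{LEM:inforder}(i) then gives $\nu_P(h+s+l)\ge \nu_P(t')-\nu_P(t)$.
\end{itemize}
Setting $F=v_0-g$, we conclude
\[
\nu_P(F'-\eta)\ge \nu_P(t')-\nu_P(t)=r_P(1-d/2)
\]
at every infinite place.

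The second step is to derive the degree bound. Suppose $\deg_t\eta = e > d/2-1$. Then $\nu_P(\eta)=-r_P e < r_P(1-d/2)$, so $F'$ must cancel the leading term of $\eta$ at every infinite place. We will split $F$ into its finite part and its part near infinity. Poles of $F$ at finite places can be dealt with as follows: since $F'$ equals $\eta$ plus terms whose finite poles are simple or logarithmic, we repeat the order comparisons of Theorems~\ref{THM:HermiteReduction} and~\ref{THM:SpecialReduction}. We expect those arguments to show that the relevant part of $F$ can be taken integral over $k[t]$, i.e.\ $F=A+Bt'$ with $A,B\in k[t]$, modulo pieces whose derivatives are harmless at infinity (proper fractions contribute with order at least $\nu_P(t')-\nu_P(t)$ by Lemma~\ref{LEM:inforder}(i) together with Proposition~\ref{PROP:orderinf}).

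We will then imitate the proof of Lemma~\ref{LEM:inforder}(ii), with $\eta$ now of degree up to $d-2$ but the same inequality. That argument showed $B_0=0$ and then $B=0$, using only the integrality of $C=\frac{t}{t'}(A_0+\eta+B_0t')$ at $t^{-1}$. Here $\deg\eta<d-1$ still holds by construction of the polynomial remainder, so the same reasoning gives $B=0$ and $A\in k$. Hence $F'\in k$, and then $\nu_P(\eta-F')\ge r_P(1-d/2)$ forces every term of $\eta$ of degree greater than $d/2-1$ to vanish. This gives $\deg_t\eta\le d/2-1$.

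The main obstacle is the first half of the second step: justifying that $F=v_0-g$ may be assumed integral over $k[t]$ despite the presence of the logarithmic terms and of $h,s$. Poles of $F$ of order at least $1$ at a finite normal or zeroth-kind place produce poles of $F'$ of order at least $r_P+1$, respectively $2$, which the simple-pole terms cannot cancel. We expect this to settle it via Proposition~\ref{PROP:order}, exactly as in the earlier uniqueness proofs.
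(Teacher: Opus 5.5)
Your proposal is correct and follows the paper's proof essentially step for step. The chain is the same: Liouville's theorem; the bounds $\nu_P(u_i'/u_i)\ge \nu_P(t')-\nu_P(t)$ and $\nu_P(h+s+l)\ge \nu_P(t')-\nu_P(t)$ at infinite places via Proposition~\ref{PROP:orderinf} and Lemma~\ref{LEM:inforder}(i); the finite-place order comparisons (as in Theorems~\ref{THM:HermiteReduction} and~\ref{THM:SpecialReduction}) to make $F$ integral over $k[t]$; and then Lemma~\ref{LEM:inforder}(ii), which gives $F\in k$ and hence the degree bound.

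One thing to tidy: no ``modulo pieces'' splitting of $F$ is needed. The comparison in your last paragraph already gives $\nu_P(F)\ge 0$ at every finite place directly, so Lemma~\ref{LEM:inforder}(ii) applies verbatim rather than having to be imitated.
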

	\begin{proof}
		  Write $f=g'+h+s+l+\eta$ as in Theorem~\ref{THM:WholeReduction} and set $R=f-g'$. If $f$ has an elementary integral over $K$, so does $R$. By~\cite[Thm 5.5.2]{BronsteinBook}, there exist $F\in K$, $c_1,\dots,c_n\in C_k$ and $u_1,\dots,u_n\in K\setminus\{0\}$ such that
		\[
		R=F'+\sum_{i=1}^{n}c_i\frac{u_i'}{u_i}.
		\]
		
		Let $P$ be a place of $K$ with ramification index $r_P$. If $P$ is normal, then $\nu_{P}(\frac{u_i'}{u_i})\ge -r_P$ by Proposition~\ref{PROP:order}. Similar to the proof of Theorem~\ref{THM:HermiteReduction}, one can show that $F$ has no normal part. If $P$ is special, then $\nu_{P}(\frac{u_i'}{u_i})\ge -1$ by Proposition~\ref{PROP:order} and Lemma~\ref{LEM:wpspecial}. Similar to the proof of Theorem~\ref{THM:SpecialReduction}, one can show that $F$ is integral over $k[t]$.
		
		If $P$ is an infinite place, then $\nu_{P}(t')=-\frac{r_P\deg_t q}{2}$ by $(t')^2=q$, which implies $\nu_{P}(t')+r_P=\frac{r_P(2-\deg_t q)}{2}<0$. Then
		\[
		\nu_{P}\left(\frac{u_i'}{u_i}\right)\ge \nu_{P}(t')+r_P=\nu_{P}(t')-\nu_{P}(t)
		\] by Proposition~\ref{PROP:orderinf}. Moreover, $\nu_{P}(h)$, $\nu(s)$ and $\nu_{P}(l)$ are all greater than or equal to $\nu_{P}(t')-\nu_{P}(t)$  by Lemma~\ref{LEM:inforder} (i). Hence $\nu_{P}(F'-\eta)\ge \nu_{P}(t')-\nu_{P}(t)$. Then $F\in k$ by of Lemma~\ref{LEM:inforder} (ii).
		
		If $\deg_t \eta>0$, then
		\[
		\nu_{P}(t')+r_P=\nu_{P}(t')-\nu_{P}(t)\le \nu_{P}(F'-\eta)=\nu_{P}(\eta).
		\]
		Hence $\deg_t \eta=-\frac{\nu_{P}(\eta)}{r_P} \le -\frac{\nu_{P}(t')}{r_P}-1=\frac{\deg_t q}{2}-1$.
	\end{proof}
	
	\section{The Appetizer Revisited}\label{SECT:revisit}
	
	We now apply the Hermite reduction and Weierstrass reduction to evaluate the integrals $I_n(z) :=  \int \wp(z)^n \, dz$ with $n\in \bN$ in Section~\ref{SECT:start}. Let $k=\mathbb{C}(z)$ be the field of
	rational functions equipped with the derivation $':=d/dz$. Then the field of
	constants of $k$ is $\mathbb{C}$. Let $K = k(t,t')$ be a Weierstrass-like extension and $m=X^2-q\in k[t,X]$ be the monic minimal polynomial of $t'$, where $q=4t^3 - g_2 t - g_3$ with $g_2,g_3\in\mathbb{C}$ and $27g_3^2 - g_2^3 \neq 0$. In this sense, $t$ satisfies the same differential equation as the Weierstrass-$\wp$ function. Then $\{1, \; t'\}$ is an integral basis of $K$.
	
	Hypothesis~\ref{HYPO:constantspecial} holds for our setting by Remark~\ref{rmk}, i.e., any special point of $m$ is a constant in ${\mathbb{C}}$. Then $t$ itself is a polynomial remainder. Since $\frac{\deg_t q}{2}-1 < 1$, $t$ has no elementary integral over $K$ by Corollary~\ref{COR:elementaryint}. As in Section~\ref{SECT:start}, $\zeta(z)$ stands for the integral of $t$.
	
	Note that $t^n$ and $(t^nt')'$ lie in $k[t]$ for $n\in \mathbb{N}$. It follows that $\cS^*(t^n)=t^n$ and $\cS^*\bigl( (t^nt')\bigr)=(t^nt')'$.
	Applying the polynomial reduction to $t^2$ yields that
	\begin{equation*}
		t^2=\left(\frac{1}{6}t'\right)'+\frac{g_2}{12}.
	\end{equation*}
	Hence a polynomial remainder of $t^2$ is $\frac{g_2}{12}$, $t^2\in K'$, i.e.,
	\[
	\int t^2 \, dz = \frac{1}{6} t' + \frac{g_2}{12} z .
	\]
	Applying the polynomial reduction to $t^3$, we find that
	\[
	t^3 = \left(\frac{1}{10}tt'\right)' +\frac{3g_2}{20}t + \frac{g_3}{10}.
	\]
	Then $t^3$ has a polynomial remainder $\frac{3g_2}{20}t+\frac{g_3}{10}$. So $t^3$ has no elementary integral over $K$ by Corollary~\ref{COR:elementaryint}. Using $\zeta(z)$, we can represent the integral as
	\[\int t^3 \, dz = \frac{1}{10} tt' -\frac{3g_2}{20} \zeta + \frac{g_3}{10} z .\]
	Similarly, applying the polynomial reduction to $t^4$ yields that
	\[
	t^4 = \left(\frac{1}{14}t^2t'\right)'  -\frac{g_3}{7}t + \frac{5g_2^2}{336}
	\]
	Then $-\frac{g_3}{7}t+\frac{5g_2^2}{336}$ is a polynomial remainder of $t^4$, which implies $t^4$ has no elementary integral over~$k$. With the help of $\zeta(z)$, the integral of $t^4$ is represented as
	\[\int t^4 \, dz = \frac{1}{14}t^2t' + \frac{5g_2}{168} t' + \frac{5g_2^2}{336}z-\frac{g_3}{7}\zeta.
	\]
	In fact, for any $n\in \bN$, $t^n$~admits the following decomposition:
	\[
	t^n = \left(\frac{1}{4n-2}t^{n-2}t'\right)' + \frac{(n-2)g_3}{4n-2}t^{n-3} + \frac{(2n-3)g_2}{8n-4} t^{n-2}.
	\]
	Substituting $t=\wp(z)$ into the identity and integrating w.r.t. $z$ yields the claimed recurrence for $J_n(z)$ in Section~\ref{SECT:start}.
	%\section{Conclusion}
	%\newpage

%%
%% The next two lines define the bibliography style to be used, and
%% the bibliography file.
%\balance
\bibliographystyle{plain}

	%\bibliography{integral}

\begin{thebibliography}{10}

\bibitem{Boettner2010}
Stefan~T. Boettner.
\newblock {\em Mixed Transcendental and Algebraic Extensions for the
  Risch-Norman Algorithm}.
\newblock Phd thesis, Tulane University, New Orleans, USA, 2010.

\bibitem{bostan10b}
Alin Bostan, Shaoshi Chen, Fr\'{e}d\'{e}ric Chyzak, and Ziming Li.
\newblock Complexity of creative telescoping for bivariate rational functions.
\newblock In {\em Proceedings of the 35th International Symposium on Symbolic
  and Algebraic Computation}, page 203–210. ACM, 2010.

\bibitem{BCCLX2013a}
Alin Bostan, Shaoshi Chen, Fr\'{e}d\'{e}ric Chyzak, Ziming Li, and Guoce Xin.
\newblock Hermite reduction and creative telescoping for hyperexponential
  functions.
\newblock In {\em Proceedings of the 38th International Symposium on Symbolic
  and Algebraic Computation}, page 77–84. ACM, 2013.

\bibitem{BCLS2018}
Alin Bostan, Fr\'{e}d\'{e}ric Chyzak, Pierre Lairez, and Bruno Salvy.
\newblock Generalized hermite reduction, creative telescoping and definite
  integration of {D}-finite functions.
\newblock In {\em Proceedings of the 43th International Symposium on Symbolic
  and Algebraic Computation}, page 95–102. ACM, 2018.

\bibitem{Bronstein1990}
Manuel Bronstein.
\newblock Integration of elementary functions.
\newblock {\em J. Symbolic Comput.}, 9(2):117--173, 1990.

\bibitem{BronsteinBook}
Manuel Bronstein.
\newblock {\em {S}ymbolic {I}ntegration {I}: {T}ranscendental {F}unctions}.
\newblock Springer-Verlag, Berlin, 2005.

\bibitem{Bronstein2007}
Manuel Bronstein.
\newblock Structure theorems for parallel integration.
\newblock {\em J. Symbolic Comput.}, 42(7):757--769, 2007.

\bibitem{chen21b}
Shaoshi Chen, Lixin Du, and Manuel Kauers.
\newblock Lazy {H}ermite reduction and creative telescoping for algebraic
  functions.
\newblock In {\em {P}roceedings of the 46th {I}nternational {S}ymposium on
  {S}ymbolic and {A}lgebraic {C}omputation}, pages 75--82. ACM, 2021.

\bibitem{CDK2023}
Shaoshi Chen, Lixin Du, and Manuel Kauers.
\newblock Hermite reduction for {D}-finite functions via integral bases.
\newblock In {\em Proceedings of the 48th International Symposium on Symbolic
  and Algebraic Computation}, page 155–163. ACM, 2023.

\bibitem{ChenKauersKoutschan2016}
Shaoshi Chen, Manuel Kauers, and Christoph Koutschan.
\newblock Reduction-based creative telescoping for algebraic functions.
\newblock In {\em Proceedings of the 41th ACM International Symposium on
  Symbolic and Algebraic Computation}, page 175–182. ACM, 2016.

\bibitem{Chen2018JSC}
Shaoshi Chen, Mark van Hoeij, Manuel Kauers, and Christoph Koutschan.
\newblock Reduction-based creative telescoping for fuchsian {D}-finite
  functions.
\newblock {\em J. Symbolic Comput.}, 85:108--127, 2018.

\bibitem{chevalley1951}
Claude Chevalley.
\newblock {\em {Introduction to the {T}heory of {A}lgebraic {F}unctions of
  {O}ne {V}ariable}}.
\newblock American Mathematical {S}urveys, 1951.

\bibitem{ChyzakSalvy1998}
Fr{\'e}d{\'e}ric Chyzak and Bruno Salvy.
\newblock Non-commutative elimination in {O}re algebras proves multivariate
  identities.
\newblock {\em J. Symbolic Comput.}, 26(2):187--227, 1998.

\bibitem{DGLL2025}
Hao Du, Yiman Gao, Wenqiao Li, and Ziming Li.
\newblock Complete reduction for derivatives in a primitive tower.
\newblock In {\em Proceedings of the 50th International Symposium on Symbolic
  and Algebraic Computation}, page 42–51. ACM, 2025.

\bibitem{DHL2018}
Hao Du, Hui Huang, and Ziming Li.
\newblock A {$q$}-analogue of the modified {A}bramov-{P}etkov\v{s}ek reduction.
\newblock In {\em Advances in computer algebra}, volume 226 of {\em Springer
  Proc. Math. Stat.}, pages 105--129. Springer, Cham, 2018.

\bibitem{Hermite1872}
Charles Hermite.
\newblock Sur l'int\'egration des fractions rationnelles.
\newblock {\em Annales Scientifiques de l'\'Ecole Normale Sup\'erieure.
  Deuxi\'eme S\'erie}, 1:215--218, 1872.

\bibitem{Kauers2023}
Manuel Kauers.
\newblock {\em {D}-{F}inite {F}unctions}.
\newblock Springer-Verlag, Cham, 2023.

\bibitem{Kolchin1953}
E.~R. Kolchin.
\newblock Galois theory of differential fields.
\newblock {\em American Journal of Mathematics}, 75(4):753--824, 1953.

\bibitem{koutschan10c}
Christoph Koutschan.
\newblock {HolonomicFunctions (User's Guide)}.
\newblock Technical Report 10-01, RISC Report Series, University of Linz,
  Austria, January 2010.

\bibitem{kubhakar23}
Partha Kumbhakar and Varadharaj~R. Srinivasan.
\newblock Liouville’s theorem on integration in finite terms for $d_\infty$,
  $sl_\infty$, and weierstrass field extensions.
\newblock {\em Archiv der Mathematik}, 121:371--383, 2023.

\bibitem{Ostrogradsky1845}
Mikhail~Vasil'evich Ostrogradski{\u\i}.
\newblock De l'int{\'e}gration des fractions rationnelles.
\newblock {\em Bull.\ de la classe physico-math{\'e}matique de l'Acad.\
  Imp{\'e}riale des Sciences de Saint-P{\'e}tersbourg}, 4:145--167, 286--300,
  1845.

\bibitem{pila22}
Jonathan Pila and Jacob Tsimerman.
\newblock Ax-{S}chanuel and exceptional integrability.
\newblock Technical Report 2202.04023, ArXiv, 2022.

\bibitem{raab22}
Clemens~G. Raab and Michael~F. Singer, editors.
\newblock {\em Integration in Finite Terms: Fundamental Sources}.
\newblock Texts and Monographs in Symbolic Computation. Springer, 2022.

\bibitem{Risch1969}
Robert~H. Risch.
\newblock The problem of integration in finite terms.
\newblock {\em Trans. Amer. Math. Soc.}, 139:167--189, 1969.

\bibitem{Risch1970}
Robert~H. Risch.
\newblock The solution of the problem of integration in finite terms.
\newblock {\em Bull. Amer. Math. Soc.}, 76:605--608, 1970.

\bibitem{Rybowicz91}
Marc Rybowicz.
\newblock An algorithms for computing integral bases of an algebraic function
  field.
\newblock In {\em Proceedings of the 16th International Symposium on Symbolic
  and Algebraic Computation}, page 157–166. ACM, 1991.

\bibitem{Srinivasan2017}
Varadharaj~Ravi Srinivasan.
\newblock Liouvillian solutions of first order nonlinear differential
  equations.
\newblock {\em Journal of Pure and Applied Algebra}, 221(2):411--421, 2017.

\bibitem{stichtenoth2009}
Henning Stichtenoth.
\newblock {\em Algebraic function fields and codes}.
\newblock Springer, 2009.

\bibitem{Trager1984}
Barry~M. Trager.
\newblock {\em {On the Integration of algebraic functions}}.
\newblock Phd thesis, { MIT, Computer Science}, 1984.

\bibitem{vanderHoeven21}
Joris van~der Hoeven.
\newblock Constructing reductions for creative telescoping: the general
  differentially finite case.
\newblock {\em Applicable Algebra in Engineering, Communication and Computing},
  32(5):575–602, nov 2021.

\bibitem{vanHoeij94}
Mark van Hoeij.
\newblock An algorithm for computing an integral basis in an algebraic function
  field.
\newblock {\em J. Symbolic Comput.}, 18(4):353--363, 1994.

\bibitem{WW1927}
E.~T. Whittaker and G.~N. Watson.
\newblock {\em A course of modern analysis}.
\newblock Cambridge Mathematical Library.

\bibitem{Zeilberger1990}
Doron Zeilberger.
\newblock A holonomic systems approach to special functions identities.
\newblock {\em J. Comput. Appl. Math.}, 32:321--368, 1990.

\end{thebibliography}

\appendix
	\section{Splitting Factorization}\label{APPEN:splitting}
	
	Let $K=k(t,t')$ be a Weierstrass-like extension over $k$ and let $m\in k[t,X]$ be the monic minimal polynomial of $t'$. For any $D\in k[t]$, there exist $D_N,D_S\in k[t]$ such that $D=D_N D_S$, where all irreducible factors of $D_N$ are normal and those of $D_S$ are special. This factorization is unique up to a nonzero multiplicative element in $k$, and is called the \emph{splitting factorization} of $D$ (w.r.t. $m$). In this section, an algorithm is presented to compute splitting factorization by gcd-computation and resultants.
	
	\begin{lemma}\label{LEM:splitting}
		Let $D\in k[t]$ be squarefree and set $D_0(t,y):=\kappa(D)+\partial_t(D)y$, where $y$ is an indeterminate. Let $R(t)\in k[t]$ be the Sylvester resultant of $D_0(t,y)$ and $m(t,y)$ w.r.t. $y$. If $\beta\in \bar{k}$ is a root of $D$, then $\beta$ is a special point of $m$ if and only if $\beta$ is a root of $R$. Consequently, $D$ and $\gcd(R,D)$ have the same irreducible special factors.
	\end{lemma}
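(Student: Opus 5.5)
Fix a root $\beta\in\bar{k}$ of $D$. The plan is to evaluate the resultant at $t=\beta$ and identify the vanishing of $R(\beta)$ with the existence of a common root of $D_0(\beta,y)$ and $m(\beta,y)$. Two degree checks make this work. Since $m$ is monic in $y$, the leading coefficient of $m$ in $y$ does not vanish at $\beta$. Since $D$ is squarefree, $\partial_t(D)(\beta)\neq 0$, so $D_0(\beta,y)$ still has degree exactly $1$ in $y$. Hence the specialization property of the Sylvester resultant applies without degree drop:
\[
R(\beta)=\operatorname{Res}_y\bigl(D_0(\beta,y),m(\beta,y)\bigr),
\]
possibly up to a nonzero factor, which in fact is $1$ here. (If $\partial_t(D)$ happened to be constant in $y$, the degree still matches, so no subtlety arises.)

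The key computation is the root of $D_0(\beta,y)$. Differentiating $D(\beta)=0$ gives
\[
\kappa(D)(\beta)+\partial_t(D)(\beta)\,\beta'=0,
\]
exactly as in the proof of the lemma on constant special points, and $\beta'\in k(\beta)$. Therefore
\[
D_0(\beta,y)=\partial_t(D)(\beta)\,(y-\beta'),
\]
a linear polynomial whose unique root is $y=\beta'$. For a linear polynomial $a(y-\gamma)$ and a monic $m$, the resultant equals $a^{\deg m}\,m(\gamma)$ up to sign. So
\[
R(\beta)=\pm\,\partial_t(D)(\beta)^{\,n}\,m(\beta,\beta'),
\]
and since $\partial_t(D)(\beta)\neq0$, $R(\beta)=0$ if and only if $m(\beta,\beta')=0$. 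That is exactly the statement that $\beta$ is a special point of $m$.

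For the consequence, let $p$ be an irreducible factor of $D$. By the remark after Definition~\ref{DEF:special}, either all roots of $p$ are special or none are. If $p$ is special, every root of $p$ is a root of $R$; since $p$ is irreducible, hence separable in characteristic $0$, we get $p\mid R$ and so $p\mid\gcd(R,D)$. If $p$ is normal, no root of $p$ is a root of $R$, so $p\nmid\gcd(R,D)$. Thus $\gcd(R,D)$ and $D$ have the same special irreducible factors. Because $D$ is squarefree, $\gcd(R,D)=D_S$ and $D_N=D/\gcd(R,D)$, which gives the algorithm.

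The only delicate step is justifying that the resultant commutes with specialization at $t=\beta$, i.e. that no leading coefficient in $y$ vanishes there. This is exactly where monicity of $m$ and squarefreeness of $D$ are used.
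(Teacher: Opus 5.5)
Your proof is correct and matches the paper's argument: you specialize the resultant at $\beta$, with no degree drop because $m$ is monic and $\partial_t(D)(\beta)\neq 0$, and then show $D_0(\beta,y)=\partial_t(D)(\beta)(y-\beta')$. You get that identity by differentiating $D(\beta)=0$, whereas the paper factors $D=(t-\beta)\tilde{D}$; the two computations are equivalent, and your explicit formula $R(\beta)=\pm\,\partial_t(D)(\beta)^n\,m(\beta,\beta')$ and your argument for the ``consequently'' part spell out details the paper leaves implicit.
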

	\begin{proof}
		Let $\beta\in \bar{k}$ be a root of $D$. Since $D$ is squarefree, we have $\partial_t(D)(\beta)\neq0$.
		As $m(t,y)$ is monic in $y$, the degrees (in $y$) of
		$D_0(\beta,y)$ and $m(\beta,y)$ coincide with those of
		$D_0(t,y)$ and $m(t,y)$, respectively. Then $R(\beta)$ is the resultant of $D_0(\beta,y)$ and $m(\beta,y)$ w.r.t. $y$.
		
		Write $D=(t-\beta)\tilde{D}$. Then $\tilde{D}(\beta)\ne 0$. A calculation similar to~\eqref{EQ:diffroot} yields
		$\kappa(D)(\beta)=-\beta'\tilde{D}(\beta)$ and $\partial_t(D)(\beta)=\tilde{D}(\beta)$.
		Then $D_0(\beta,y)=\tilde{D}(\beta)(y-\beta')$, and $\res_y\bigl(D_0(\beta,y),m(\beta,y)\bigr)=0$ if and only if $m(\beta,\beta')=0$, i.e., $\beta$ is a special point of~$m$.
	\end{proof}
	
	\begin{algorithm}\label{ALG:splitting}
		{\sc SplittingFactorization}
		
		\smallskip \noindent
		{\sc Input:} $D\in k[t]$ and $m$, the monic irreducible polynomial of $t'$.
		
		\smallskip \noindent
		{\sc Output:} the splitting factorization of $D$ w.r.t. $m$.
		
		\begin{itemize}
			\item[1.] Compute the squarefree factorization $D=D_1^{\mu_1}\dots D_n^{\mu_n}$ of $D$
			\item[2.] $D_N\leftarrow1$, $D_S\leftarrow1$
			\item[2.] {\sc for $i$ from $1$ to $n$ do}
			\begin{itemize}
				\item[] $D_0\leftarrow\kappa(D_i)+\partial_t(D_i)y$, $R\leftarrow{\rm resultant}_y\bigl(D_0(t,y),m(t,y)\bigr)$
				\item[] $G\leftarrow\gcd(R,D_i)$
				\item[] $D_S\leftarrow D_S\,G^{\mu_i}$, $D_N\leftarrow D_N \left(\frac{D_i}{G}\right)^{\mu_i}$
			\end{itemize}
			{\sc end do}
			\item[4.] {\sc return} $D_N,D_S$
		\end{itemize}
		
	\end{algorithm}
	
	The correctness is guaranteed by Lemma~\ref{LEM:splitting}. As a by product, we obtain the squarefree factorization of both  $D_N$ and $D_S$.

\end{document}